\newtheorem{thm}{Theorem}[section]
\newtheorem{lem}[thm]{Lemma}
\newtheorem{cor}[thm]{Corollary}
\newtheorem{rem}[thm]{Remark}
\newtheorem{prop}[thm]{Proposition}
\newcommand{\mysection}[1]{\section{#1} \setcounter{equation}{0}}
\newcommand{\proof}{{\sc Proof.} \quad}
\newcommand{\R}{\mathbb{R}}
\newcommand{\be}{\begin{equation} \label}
\newcommand{\ee}{\end{equation}}
\newcommand{\bes}{\begin{equation} \begin{array}{c} \label}
\newcommand{\ees}{\end{array} \end{equation}}
\newcommand{\bea}{\begin{eqnarray}\label}
\newcommand{\eea}{\end{eqnarray}}
\newcommand{\beas}{\begin{eqnarray} \begin{array}{rcl} \label}
\newcommand{\eeas}{\end{array} \end{eqnarray}}
\newcommand{\bas}{\begin{eqnarray*}}\newcommand{\eas}{\end{eqnarray*}}
\newcommand{\bass}{\begin{eqnarray*} \begin{array}{rcl}}
\newcommand{\eass}{\end{array} \end{eqnarray*}}
\newcommand{\basss}{\begin{eqnarray*} \begin{array}{c}}
\newcommand{\easss}{\end{array} \end{eqnarray*}}
\newcommand{\bit}{\begin{itemize}}
\newcommand{\eit}{\end{itemize}}
\newcommand{\abs}{\\[3mm]}
\newcommand{\w}{\omega}
\newcommand{\CC}{{\mathbb{C}}}
\newcommand{\set}{{\cal S}(m,M,B)}
\newcommand{\8}{\infty}%
\renewcommand{\abs}[1]{\left| #1 \right|}
\renewcommand{\set}[1]{\left\{ #1 \right\}}
\newcommand{\norm}[1]{\left\| #1 \right\|}
\newcommand{\ov}[1]{\overline{ #1 }}
\newcommand{\cv}{\mathfrak{c}}
\renewcommand{\a}{\alpha}
\newcommand{\e}{\varepsilon}
\begin{document}

\title{Kinetic energy represented in terms of moments of vorticity and applications.}
\author{
Tomasz Cie\'slak \\
{\small Institute of Mathematics, Polish Academy of Sciences, \'Sniadeckich 8, Warsaw, Poland}\\
{\small e-mail: T.Cieslak@impan.pl}\\
\\
Krzysztof Oleszkiewicz\\
{\small Institute of Mathematics, Polish Academy of Sciences, \'Sniadeckich 8, Warsaw, Poland}\\
{\small Institute of Mathematics, University of Warsaw, ul. Banacha 2, Warsaw, Poland}\\
{\small e-mail: koles@mimuw.edu.pl}\\
\\
Marcin Preisner\\
{\small Instytut Matematyczny, Uniwersytet Wroc\l{}awski, Pl. Grunwaldzki 2/4, Wroc\l{}aw, Poland}\\
{\small e-mail: marcin.preisner@uwr.edu.pl}\\
\\
Marta Szuma\'nska\\
{\small Institute of Mathematics, University of Warsaw, ul. Banacha 2, Warsaw, Poland}\\
{\small Institute of Mathematics, Polish Academy of Sciences, \'Sniadeckich 8, Warsaw, Poland}\\
{\small e-mail: M.Szumanska@mimuw.edu.pl}}

\maketitle

\begin{abstract} We study 2d vortex sheets with unbounded support. First we show a version of the Biot-Savart law related to a class of objects including such vortex sheets. Next, we give a formula associating the kinetic energy of a very general class of flows with certain moments of their vorticities. It allows us to identify a class of vortex sheets of unbounded support being only $\sigma$-finite measures (in particluar including measures $\omega$ such that $\omega(\R^2)=\infty$), but with locally finite kinetic energy. One of such examples are celebrated Kaden approximations. We study them in details. In particular our estimates allow us to show that the kinetic energy of Kaden approximations in the neighbourhood of an origin is dissipated, actually we show that the energy is pushed out of any ball centered in the origin of the Kaden spiral. The latter result can be interpreted as an artificial viscosity in the center of a spiral.

\noindent
  {\bf Key words:} vortex sheet, spherical averages, Biot-Savart law, Kaden spirals. \\
  {\bf MSC 2010:} 76M40, 76B47, 28A25. \\
\end{abstract}

\mysection{Introduction}\label{intro}

In the present paper we study vortex sheets which are not compactly supported.
In engineering or physics literature vortex sheets are usually two-dimensional divergence-free (in the sense of distributions) vector
fields such that their vorticities are zero except on a curve $\cv$, along which tangential components of
velocity are discontinuous. For us vortex sheets are a wider class of objects, namely 2d divergence-free velocity fields
whose vorticity $\omega$ are $\sigma-$ finite measures only. In particular unbounded measures $\omega$, i.e. such that
\[
\omega(\R^2)=\infty,
\]
are included. The usual definition of vortex sheets
assumes that such objects have vorticities being compactly supported finite Radon measures, see
\cite{DiP_Majda}. However such restriction eliminates from the considerations well-known
spirals of vorticity, self-similar objects well-established in engineering and physics literature
like Kaden spirals (see \cite{kaden, elling}), Prandtl spirals (we refer the reader to \cite{prandtl,
kambe, saffman, TCMS}) or recent hyperbolic spirals introduced in \cite{sohn}. Extension of the theory to such self-similar vortex spirals seems important and required. Moreover we would like to restrict ourselves to vector fields with locally finite kinetic energy. The importance
of such objects is emphasized in the introduction of \cite{TCMS}. It was noticed in \cite{TCMS} that a crucial property of a compactly supported vorticity measure $\omega$
\begin{equation}\label{wazne}
\omega(B(0,r))=c r^{\alpha},
\end{equation}
where $c$ and $\alpha$ are positive constants, yields that local $H^{-1}$-norm of $\omega$ is finite.
By the lemma of Schochet \cite{ss} it means that the kinetic energy generated by a compactly supported
part of such a vortex sheet is locally finite.
Property \eqref{wazne} is satisfied at least
by well-known examples of Kaden and Prandtl, see \cite{TCMS}.

The main concern of the present paper is to extend the previous study in \cite{TCMS} to the case of vortex sheets which are not necessarily compactly supported, moreover such that their vorticity $\omega$ is an unbounded measure, i.e.
\[
\omega(\R^2)=\infty.
\]One of the main questions we address is whether the kinetic energy generated by such objects is locally finite or not. We shall give precise conditions yielding sharp estimates of local kinetic energy from above and below for a class of objects satisfying \eqref{wazne}, see Theorem \ref{mainthm1} below.

When speaking about kinetic energy carried by a vorticity we need to know the divergence-free velocity associated
to the vorticity. In case of vorticity being a compactly supported regular function, velocity is given by the usual Biot-Savart operator. One of the tasks of the present paper is to identify the velocity given by a vorticity being $\sigma$-finite measure satisfying \eqref{wazne}. This is discussed in Section \ref{vv}. We shall say more on it also at the end of Introduction.

The question concerning kinetic energy is very important for several reasons. On the one hand it is a natural expectation for an object which is supposed to have a physical meaning.
Next, when looking for vortex sheets weak solutions of the 2d Euler equations one has to make sure that the following integral
\[
\int_0^T\int_{\R^2} v(x,t)\otimes v(x,t):\nabla \phi(x,t) dxdt
\]
is finite for a divergence-free velocity field $v:\R^2\rightarrow \R^2$ and any smooth compactly supported
divergence-free test function $\phi:\R^2\rightarrow \R^2$. To this end it suffices that  the local kinetic energy of $v$, which is defined by
    \begin{equation}\label{def_energy}
    E_r(\w) = \int_{B(0,r)} |v(x)|^2 \, dx,
    \end{equation}
is finite.

Our main theorem states that the local kinetic energy $E_r(\w)$ of a nonnegative $\sigma$-finite measure of vorticity $\w$ satisfying \eqref{wazne} undergoes a precise estimate from below and above.
\begin{thm}\label{mainthm1}
Let $\w$ be a nonnegative $\sigma$-finite measure satisfying \eqref{wazne} with $\a\in (0,1)$. Then, for $c$ being a constant appearing in \eqref{wazne}, we have
\begin{equation}\label{mainthm1est}
    \frac{c^2}{4\pi\a} \, r^{2\a} \leq  E_r(\w) \leq \frac{c^2\a\pi}{4\sin^2(\pi \a)}  \, r^{2\a}, \qquad r>0 .
\end{equation}
\end{thm}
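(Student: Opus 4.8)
The plan is to reduce everything to a single representation of the local energy in terms of the complex moments of $\omega$, and then to read off both inequalities from it. Writing $\mathbb{R}^2\cong\mathbb{C}$ and using the Biot--Savart velocity identified in Section~\ref{vv}, one has $v(x)=\frac{1}{2\pi}\int_{\mathbb{C}}\frac{(x-y)^\perp}{|x-y|^2}\,d\omega(y)$, and since rotation by $\pi/2$ preserves the inner product, $|v(x)|^2=\frac{1}{4\pi^2}\,\mathrm{Re}\iint \frac{d\omega(y)\,d\omega(z)}{(x-y)\overline{(x-z)}}$. First I would compute the spherical average of $|v|^2$ over the circle $|x|=\rho$. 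Using $\bar x=\rho^2/x$ on that circle and the residue theorem (equivalently, the $\mathbb{C}$-form of the mean value property), the inner double integral localizes: the contribution of a pair $(y,z)$ vanishes unless $y$ and $z$ lie on the same side of $\{|x|=\rho\}$. Since $\omega(\partial B(0,\rho))=0$ for every $\rho$ by \eqref{wazne}, this splitting is unambiguous.

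Expanding the resulting kernel in geometric series (in $\bar z y/\rho^2$ when both points are inside, and in $\rho^2/(\bar z y)$ when both are outside) I would obtain the energy--moment identity
\[
\frac{1}{2\pi}\int_0^{2\pi}|v(\rho e^{i\phi})|^2\,d\phi=\frac{1}{4\pi^2}\Big(\sum_{n\ge0}\frac{|M_n(\rho)|^2}{\rho^{2n+2}}+\sum_{n\ge0}\rho^{2n}|N_n(\rho)|^2\Big),
\]
where $M_n(\rho)=\int_{|y|<\rho}y^n\,d\omega(y)$ and $N_n(\rho)=\int_{|y|>\rho}y^{-(n+1)}\,d\omega(y)$ are the interior and exterior complex moments. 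All terms are manifestly nonnegative, so $E_r(\omega)=\int_0^r 2\pi\rho\,\overline{|v|^2}(\rho)\,d\rho$ may be handled termwise by Tonelli, where $\overline{|v|^2}(\rho)$ denotes the spherical average above. The lower bound is then immediate: keeping only $M_0(\rho)=\omega(B(0,\rho))=c\rho^\alpha$ gives $\overline{|v|^2}(\rho)\ge \frac{c^2}{4\pi^2}\rho^{2\alpha-2}$, and integrating $\int_0^r 2\pi\rho\cdot\frac{c^2}{4\pi^2}\rho^{2\alpha-2}\,d\rho=\frac{c^2}{4\pi\alpha}r^{2\alpha}$.

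For the upper bound I would estimate every moment using $\omega\ge0$ and the radial profile. Since $|M_n(\rho)|\le\int_{|y|<\rho}|y|^n\,d\omega$ and the pushforward of $\omega$ under $y\mapsto|y|$ has distribution $cs^\alpha$, Stieltjes integration gives $\int_0^\rho s^n\,d(cs^\alpha)=\frac{c\alpha}{n+\alpha}\rho^{n+\alpha}$, hence $\frac{|M_n(\rho)|^2}{\rho^{2n+2}}\le\frac{c^2\alpha^2}{(n+\alpha)^2}\rho^{2\alpha-2}$; likewise $|N_n(\rho)|\le\int_{|y|>\rho}|y|^{-(n+1)}\,d\omega=\frac{c\alpha}{n+1-\alpha}\rho^{\alpha-n-1}$ (finite precisely because $\alpha<1$, this being the crucial point for $n=0$), so $\rho^{2n}|N_n(\rho)|^2\le\frac{c^2\alpha^2}{(n+1-\alpha)^2}\rho^{2\alpha-2}$. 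Summing and using the reflection identity $\sum_{n\ge0}\frac{1}{(n+\alpha)^2}+\sum_{n\ge0}\frac{1}{(n+1-\alpha)^2}=\sum_{k\in\mathbb{Z}}\frac{1}{(k+\alpha)^2}=\frac{\pi^2}{\sin^2(\pi\alpha)}$ yields $\overline{|v|^2}(\rho)\le\frac{c^2\alpha^2}{4\sin^2(\pi\alpha)}\rho^{2\alpha-2}$, and integrating against $2\pi\rho\,d\rho$ produces the constant $\frac{c^2\alpha\pi}{4\sin^2(\pi\alpha)}$.

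The main obstacle will be the rigorous derivation of the displayed energy--moment identity for a general nonnegative $\sigma$-finite (possibly infinite) measure: one must justify the Fubini/Tonelli interchange between the angular integral and the double $\omega$-integral, control the residue computation near the circle $|x|=\rho$, and verify absolute convergence of the geometric-series expansions — all of which hinge on finiteness of the moments, i.e. on $\alpha\in(0,1)$. Once this identity is in place, both estimates are sharp consequences of elementary moment bounds, and the fact that the extremal lower bound is attained exactly by the radially symmetric measure (for which only $M_0$ survives) confirms that the two-sided estimate cannot be improved at the $n=0$ level.
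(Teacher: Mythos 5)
Your proposal is correct and follows essentially the same route as the paper: the energy--moment identity you derive is exactly Theorem \ref{averages} (proved there for a.e.\ $\rho$ via the cancellation Lemma \ref{lem-residua} and the Fubini justification in Corollary \ref{cor2}), and your lower bound (keeping only the zeroth interior moment) and upper bound (termwise moment estimates summed with $\sum_{k\in\mathbb{Z}}(k+\alpha)^{-2}=\pi^2/\sin^2(\pi\alpha)$) reproduce Theorem \ref{averages2}, after which integration against $2\pi\rho\,d\rho$ gives \eqref{mainthm1est}. The technical obstacle you flag is precisely the content of Subsection \ref{srednie}, so nothing essential is missing.
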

In the proof of Theorem \ref{mainthm1} we study the spherical averages
\begin{equation}
    A_r(\w) = (2\pi)^{-1} \int_0^{2\pi} \abs{v(re^{i\theta})}^2 \, d\theta, \quad r>0
\end{equation}
related to $E_r(\w)$ by
    \begin{equation}\label{radial}
    E_r(\w) = 2\pi \int_0^r A_s(\w)\, s\, ds.
    \end{equation}
Our main tool to estimate $A_r(\w)$ is the following formula expressing $A_r(\w)$ in terms of inner and outer moments of vorticity. Since 2d plane can be viewed as a set of complex numbers $z=x+iy$, for $r>0$ and $n\geq 1$ let us denote
\begin{align}
    \label{moment0}
    {m}_{r,0}(\w) &= \w(B(0,r)),\\
    \label{moment1}
    {m}_{r,n}(\w) &= \int_{B(0,r)} u^n\, d\w(u),\\
    \label{moment2}
    {M}_{r,n}(\w) &= \int_{\CC\setminus B(0,r)} u^{-n}\, d\w(u),
\end{align}
where the powers are taken in the sense of complex numbers.

Actually, our formula for $A_r(\w)$ is proven under less restrictive assumptions on the measure $\w$ than \eqref{wazne}. Assume that a nonnegative $\sigma$-finite measure $\w$ on $\R^2$ is given such that
\begin{align}
    \label{mu-bdd2}
    \int_{\R^2} (1+|x|)^{-1} d\w(x) < \8.
\end{align}
For such measures the following result holds. 
\begin{thm}\label{averages}
Assume that a nonnegative $\sigma$-finite measure $\w$ satisfies \eqref{mu-bdd2}. Then
\begin{equation}\label{averages_main}
    4\pi^2 A_r(\w) = \sum_{n=0}^\8 r^{-2n-2} \abs{m_{r,n}(\w)}^2 + \sum_{k=1}^\8 r^{2k-2} \abs{M_{r,k}(\w)}^2
\end{equation}
for a.e. $r>0$.
\end{thm}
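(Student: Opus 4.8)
The plan is to pass to complex notation and exploit the Cauchy-kernel form of the Biot--Savart law established in Section \ref{vv}. Identifying $\R^2$ with $\CC$ and writing $z=x+iy$, that law expresses the complex velocity through the kernel $1/(z-u)$, so that for $\w$-a.e.\ $z$
\[
\abs{v(z)}^2 = \frac{1}{4\pi^2}\, \abs{\, \int_{\CC} \frac{d\w(u)}{z-u}\,}^2 .
\]
Fixing $r>0$ and setting $z=re^{i\theta}$, the quantity $4\pi^2 A_r(\w)$ is then exactly the mean over $\theta\in[0,2\pi]$ of $\abs{F(\theta)}^2$, where $F(\theta)=\int_{\CC}(re^{i\theta}-u)^{-1}\,d\w(u)$. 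The whole argument reduces to computing this mean by Fourier analysis in $\theta$.

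First I would restrict to those $r$ for which $\w(\partial B(0,r))=0$, which by $\sigma$-finiteness excludes only countably many $r$ and hence holds for a.e.\ $r$, and split the kernel according to $\abs{u}<r$ versus $\abs{u}>r$. On the inner region $\frac{1}{z-u}=\sum_{n\ge 0} u^n z^{-n-1}$ and on the outer region $\frac{1}{z-u}=-\sum_{k\ge 1} u^{-k} z^{k-1}$; substituting $z=re^{i\theta}$ and integrating against $d\w$ these become $\sum_{n\ge 0} m_{r,n}(\w)\, r^{-n-1} e^{-i(n+1)\theta}$ and $-\sum_{k\ge 1} M_{r,k}(\w)\, r^{k-1} e^{i(k-1)\theta}$, respectively. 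The decisive structural observation is that the inner part is carried entirely by the strictly negative Fourier modes $e^{-i\theta},e^{-2i\theta},\dots$, while the outer part is carried by the nonnegative modes $1,e^{i\theta},e^{2i\theta},\dots$; the two families are disjoint, so no cross terms survive. Reading off the Fourier coefficients of $F$, namely $\widehat F(-n-1)=r^{-n-1}m_{r,n}(\w)$ for $n\ge 0$ and $\widehat F(k-1)=-r^{k-1}M_{r,k}(\w)$ for $k\ge 1$, Parseval's identity turns $\frac{1}{2\pi}\int_0^{2\pi}\abs{F}^2\,d\theta$ into precisely the right-hand side of \eqref{averages_main}.

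The main obstacle is rigor: under the mere hypotheses \eqref{mu-bdd2} and $\sigma$-finiteness, $v$ need not be locally square-integrable, so \eqref{averages_main} must be read as an identity in $[0,\8]$, and the term-by-term integration against $d\w$ cannot rely on pointwise absolute convergence of the kernel series (the mass of $\w$ may accumulate near $\abs{u}=r$). I would therefore establish the coefficient formulas directly rather than through the series. For a.e.\ $r$ one first shows $F\in L^1(d\theta)$ by integrating $\int_0^{2\pi}\int_{\CC}\abs{re^{i\theta}-u}^{-1}\,d\w(u)\,d\theta$ in $r$ over a compact interval and using both the local integrability of $\abs{x-u}^{-1}$ in the plane and \eqref{mu-bdd2} to control the range $\abs{u}$ large; then one computes $\widehat F(m)=\int_{\CC}\big(\tfrac{1}{2\pi}\int_0^{2\pi}\frac{e^{-im\theta}}{re^{i\theta}-u}\,d\theta\big)\,d\w(u)$ by Fubini, evaluating the inner integral as a contour integral over $\abs{w}=1$ whose residues at $w=0$ and $w=u/r$ reproduce exactly the coefficients listed above. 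Finally, since $F\in L^1(d\theta)$, the Parseval relation $\frac{1}{2\pi}\int_0^{2\pi}\abs{F}^2\,d\theta=\sum_{m}\abs{\widehat F(m)}^2$ holds as an equality in $[0,\8]$: if the right-hand side is finite then $F$ coincides a.e.\ with the $L^2$ function having the same coefficients, and otherwise both sides are infinite. Recalling $4\pi^2 A_r(\w)=\frac{1}{2\pi}\int_0^{2\pi}\abs{F}^2\,d\theta$ then yields \eqref{averages_main}.
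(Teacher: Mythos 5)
Your argument is correct, and it reaches \eqref{averages_main} by a genuinely different technical route than the paper, even though the underlying cancellation --- the inner part of the Cauchy kernel living only on the negative Fourier modes and the outer part only on the nonnegative ones --- is the same. The paper writes $4\pi^2 A_r(\w)$ as the triple integral $\int_\CC\int_\CC\int_0^{2\pi}(u-re^{i\theta})^{-1}(\ov{w}-re^{-i\theta})^{-1}\,d\theta\,d\w(u)\,d\w(w)$, evaluates the $\theta$-integral in closed form (Lemma \ref{lem-residua}: the cross terms vanish and the remaining kernels are $(r^2-u\ov{w})^{-1}$ and $(u\ov{w}-r^2)^{-1}$), and then expands these kernels in geometric series, justifying the term-by-term integration by Tonelli together with the pointwise inequality $(r^{2}-|u||w|)^{2}\ge(r^{2}-|u|^{2})(r^{2}-|w|^{2})$; this is what forces the restriction to the full-measure set $G$ of radii with $\int_\CC\abs{r^2-|u|^2}^{-1/2}\,d\w(u)<\8$ (Lemma \ref{lem-setG}, Corollary \ref{cor2}). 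You never form the double $d\w\times d\w$ integral: you compute each Fourier coefficient of $F(\theta)=\int_\CC(re^{i\theta}-u)^{-1}\,d\w(u)$ by a single Fubini in $(u,\theta)$, which needs only the weaker a.e.\ condition $\int_0^{2\pi}\int_\CC\abs{re^{i\theta}-u}^{-1}\,d\w(u)\,d\theta<\8$ (in substance Corollary \ref{lem-setG_1}), and then invoke Parseval as an identity in $[0,\8]$ for $F\in L^1(d\theta)$. The one nonelementary ingredient you import is the uniqueness theorem for Fourier coefficients of integrable functions, needed to conclude that square-summable coefficients force $F\in L^2$, so that the two sides of \eqref{averages_main} are infinite together; granting that standard fact, your proof is complete and somewhat leaner. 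What the paper's more hands-on route buys is that everything stays at the level of Tonelli with explicit estimates, and the auxiliary objects ($G$, Corollary \ref{cor2}) are reused later, e.g.\ in the extension to signed vorticities in Corollary \ref{cor3}. Two small points: the display for $\abs{v(z)}^2$ should hold for Lebesgue-a.e.\ $z$ (not $\w$-a.e.), and the preliminary reduction to $\w(\partial B(0,r))=0$ is actually subsumed by your $L^1$ Fubini condition, since $\theta\mapsto\abs{re^{i\theta}-u}^{-1}$ is not integrable when $|u|=r$.
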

The above formula is a completely new way of viewing the relation between velocity and vorticity. We utilize it
in the last section.
\begin{rem}
Let us emphasize that Theorems \ref{mainthm1} and \ref{averages} state results holding for measures which are only $\sigma$-finite, in particular
it applies also to measures $\omega$ such that $\omega(\R^2)=\infty$ (provided they satisfy the assumptions).
\end{rem}

Moreover, we study Kaden's approximations, see \cite{kaden}, and examine their properties. In particular time evolution of the energy in any ball surrounding the origin of the Kaden spiral is computed. It is shown that such an energy is dissipated. When time approaches infinity, the kinetic energy contained in any ball surrounding the origin of the Kaden approximation tends to the minimal possible value given by the left-hand side of \eqref{mainthm1est}, while for small times Kaden's spiral's kinetic energy approaches the upper bound in \eqref{mainthm1est}. Actually, we even compute the limiting objects reached by the divergence-free velocities related to the Kaden spiral when $t$ approaches $0$ as well as when $t$ tends to infinity. The results concerning time evolution are obtained using our moment formula \eqref{averages_main} applied to the difference of two unbounded measures. Such a difference does not have to be a signed measure. Thus, extension of \eqref{averages_main} requires a precise definition of some new objects.

First of all, let us notice that when we consider a difference of two nonnegative measures $\omega_1$ and $\omega_2$, such that $\w_1(\mathbb{R}^2) = \w_2(\mathbb{R}^2)=\infty$, then $\tilde{\omega}:=\omega_1-\omega_2$ cannot be defined as a signed measure. Indeed, one has a problem to decide what is the value $\tilde{\omega}(A)$, where $A$ is such that $\omega_i(A)=\infty$ for $i=1,2$. Such technical difficulties are the main reason for which we work with the following objects, which we shall call {\it vorticities}. By definition, a {\it vorticity} is a distribution $\tilde{\omega}$ which can be represented as a difference of $\omega_1$ and $\omega_2$, such that for $i=1,2$ nonnegative $\sigma$-finite measures $\w_i$ satisfy \eqref{mu-bdd2}.

In Section \ref{kinen} we extend \eqref{mainthm1est} to vorticities $\tilde{\omega}=\omega_1-\omega_2$, for which $\w_i$ satisfy \eqref{mu-bdd2}. Complex moments $m_{r,n}$ and $M_{r,n}$, appearing in \eqref{mainthm1est}, as linear in $\w$ are extended to $\tilde{\w}$ in a natural way.

We also show an extension of the Biot-Savart law to  vorticities from a wide subclass of $\sigma$-finite measures, in particular our results are applicable also to measures $\omega$ such that $\omega(\R^2)=\infty$. Actually, again our theorem works for quite a wide class of objects.   
Moreover, for $\sigma$-finite measures satisfying \eqref{wazne} with $\alpha\in(0,1)$ it is shown that velocities obtained via the Biot-Savart law are in $L^2_{loc}$, so that the kinetic energy is finite.

Let us explain ourselves from the slightly non-orthodox structure of the paper. Namely, the technical core with the proof of Theorem \ref{averages} is a content of Subsection \ref{srednie}. It is self-contained. Moreover, some of main results which we prove with the help of Theorem \ref{averages} are presented in earlier sections. We use there results proven later in Subsection \ref{srednie}. This way technical computations are postponed.

At the end of Introduction let us state an easy fact concerning the assumptions we provided for measures $\w$. We show that measures on $\CC$ satisfying \eqref{wazne} with $\a\in (0,1)$ satisfy also the assumption \eqref{mu-bdd2}. Indeed, we have the following proposition.
\begin{prop}\label{sprawdz}
Let us fix $\a \in (0,1)$. Let a nonnegative $\sigma$-finite measure $\w$ satisfy \eqref{wazne} with such $\a$. Then \eqref{mu-bdd2}
is also satisfied.
\end{prop}
\begin{proof}
Notice that
\begin{equation*}
\begin{split}
    \int_\CC \frac{d\w(u)}{1+|u|} &= \int_0^\8 \w\left(\set{\frac{1}{1+|u|}>t}\right) \, dt = \int_0^1 \w\left(B(0, 1/t-1)\right)\, dt\\
    &= c\int_0^1 \left(\frac{1}{t}-1\right)^\a\, dt \leq c\int_0^1 t^{-\a}\, dt <\8,
\end{split}\end{equation*}
and so \eqref{mu-bdd2} holds true.

\hfill $\Box$
\end{proof}

However, the class of measures satisfying \eqref{mu-bdd2} is much wider than
those fulfiling \eqref{wazne}, for instance measures considered in \cite{jamroz} are also fine.

\vspace{0.3cm}
{\bf Notation.} We need to fix a convention which we use to speak about 2d vorticity. By definition, $(x_1,x_2)^\perp = (-x_2, x_1)$. The partial derivatives in $\R^2$ are denoted by $\partial_1$ and $\partial_2$, while a vorticity of a vector field $v=(v_1,v_2)$ is defined as $\mathrm{curl} (v)=\partial_1v_2-\partial_2v_1$. When it is convenient, we shall use complex notation for $\R^2$.

\mysection{Vorticity and velocity. Biot-Savart law}\label{vv}

Main objects of our studies are two-dimensional velocity fields associated with vortex sheets. In the case when a vorticity $\w$ related to the divergence-free velocity $v=(v_1,v_2)$ is a regular compactly supported function, $\mathrm{div} (v)=0$ and so there exists a potential $\psi$ such that $\nabla \psi=v^\perp$. In other words $v=-^\perp\nabla \psi$. Taking the curl of both sides we arrive at
\begin{equation}\label{funkcja_pradu}
-\triangle \psi = \omega.
\end{equation}
Then the velocity field $v$ is recovered from vorticity $\w$ by the Biot-Savart formula, i.e.,
\begin{equation}
\label{Biot_Savart}
v(x) = \frac{1}{2\pi} \int_{\R^2}\frac{(x-y)^\perp}{|x-y|^2}d\omega(y).
\end{equation}
The same procedure works for more general vorticities, even for compactly supported measures. Then
$\psi$, a solution to \eqref{funkcja_pradu}, still exists, and if $\w\in H^{-1}_{loc}$ then $\psi$ is
regular enough to make sure that $v$ is given by \eqref{Biot_Savart}. However it is not known whether the Biot-Savart law is still valid in the case of a vorticity which is not a bounded measure. Yet, a very important class of vortex sheets are the so-called self-similar spirals of vorticity (one of them, the Kaden spiral is studied later on in the present paper). For instance measures satisfying \eqref{wazne} are immediately unbounded, indeed
\[
\omega(\R^2)=\lim_{r\rightarrow \infty} \omega(B(0,r))=\lim_{r\rightarrow \infty} cr^{\alpha}=\infty.
\]

The question which we address is a validity of Biot-Savart's law for nonnegative measures satisfying \eqref{mu-bdd2}. We show that for such objects the Biot-Savart formula is still well-defined and recovers velocity field $v$ related to vortex sheet $\w$, i.e. $\mathrm{curl}\, (v) =\w$, provided $\w$ satisfies \eqref{mu-bdd2}. Consequently, in view of Proposition \ref{sprawdz}, the Biot-Savart law holds in particular for $\sigma$-finite measures satisfying $\eqref{wazne}$ with $\a\in(0,1)$. We show that then the integral in \eqref{Biot_Savart} for $v(x)$ is well-defined a.e., divergence-free in the sense of distributions as well as $\mathrm{curl}\, (v) =\w$ holds in the sense of distributions.

Notice that our result is not trivial since the procedure described at the beginning of the present section to derive the Biot-Savart law seems to require strong regularity assumptions. Indeed, solving \eqref{funkcja_pradu} with $\w$ being only $\sigma$-finite measure, in particular possibly $\w(\R^2)=\infty$, seems not trivial. Notice that due to the contribution from infinity of both $\w$ and a fundamental solution of Laplace operator, $\psi$ might not exist, and so one cannot tell that $v=-^\perp\nabla \psi$. Our result holds in a more general situation, when vorticity is a nonnegative measure which might not possess a stream function. A reader might check that actually stream function does not exist for Kaden's spirals (that are introduced in Section \ref{Kaden}). Nevertheless, we show that the velocity field can still be expressed by the Biot-Savart formula \eqref{Biot_Savart} if a measure $\w$ satisfies 
\eqref{wazne} with $\a\in (0,1)$. This way we validate the Biot-Savart law even for vorticities being $\sigma$-finite measures satisfying a condition stated already in \cite{TCMS} and being a consequence of Prandtl's similitude laws (see \cite{TCMS} and the references therein).

As to the proof of Theorem \ref{Biot-Sav}, the only non-standard part is a justification of the use of Fubini's theorem in \eqref{rrraz}. This requires a sort of potential theory type estimate. The required result is a claim of Corollary \ref{lem-setG_1}, a technical lemma yielding integrability of the integrand in \eqref{rrraz}. Consequently, see Proposition \ref{sprawdz}, the Biot-Savart law is also well-defined for vorticities for which \eqref{wazne} holds with $\a\in(0,1)$. Corollary \ref{lem-setG_1} is proven later, its proof is independent on the results of the current section. We hope it does not confuse the reader.

\begin{thm}\label{Biot-Sav}
Let $\w$ be a nonnegative $\sigma$-finite measure satisfying \eqref{mu-bdd2}. Assume that $v(x)$ is given by
\begin{equation}\label{Biot_Savartplus}
v(x)=\frac{1}{2\pi}\int_{\R^2}\frac{(x-y)^\perp}{|x-y|^2}d\w(y).
\end{equation}
Then $v\in L^1_{loc}(\R^2)$ (in particular is well-defined a.e.) and for all $\varphi \in C_0^\infty(\R^2)$ it holds
\begin{align}
\label{velocity1}
\int_{\R^2} \nabla \varphi(x)\cdot v(x) \, dx  &= 0,\\ \label{velocity2}
\int_{\R^2} {^\perp\nabla} \varphi (x)\cdot v(x) \, dx &=-\int_{\R^2}\varphi(y)\, d{\omega}(y).
\end{align}
\end{thm}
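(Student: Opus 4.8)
The plan is to verify that the Biot--Savart integral is well-defined in $L^1_{loc}$, and then establish the two distributional identities \eqref{velocity1} and \eqref{velocity2} by a Fubini argument, integrating the test function against the kernel and recognizing the result.

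First I would address the local integrability of $v$. The kernel $K(x-y) = (x-y)^\perp/|x-y|^2$ has $|K(x-y)| = |x-y|^{-1}$, so for $x$ in a fixed ball $B(0,R)$ the issue is twofold: the local singularity at $y=x$ and the behavior as $|y|\to\infty$. For the singularity, integrating $\int_{B(0,R)} |x-y|^{-1}\,dx$ over a bounded region is harmless since $|x-y|^{-1}$ is locally integrable in $\R^2$. For the tail, when $|y|$ is large compared to $R$ we have $|x-y|^{-1} \sim |y|^{-1}$, and assumption \eqref{mu-bdd2}, namely $\int (1+|y|)^{-1}\,d\w(y) < \infty$, controls exactly this. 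I would combine these by showing $\int_{B(0,R)} |v(x)|\,dx \leq C \int_{\R^2} (1+|y|)^{-1}\,d\w(y)$ via Tonelli on the nonnegative integrand $|K(x-y)|$, splitting the $y$-integral into $\{|y|\leq 2R\}$ and $\{|y|>2R\}$.

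Next, to prove \eqref{velocity1} and \eqref{velocity2}, I would fix $\varphi \in C_0^\infty(\R^2)$, substitute the definition of $v$, and attempt to interchange the order of integration:
\begin{equation*}
\int_{\R^2} \nabla\varphi(x)\cdot v(x)\,dx = \frac{1}{2\pi}\int_{\R^2}\left(\int_{\R^2} \nabla\varphi(x)\cdot \frac{(x-y)^\perp}{|x-y|^2}\,dx\right)d\w(y).
\end{equation*}
For the inner integral, with $y$ fixed, $K(x-y)$ is (up to constant) the Biot--Savart field of a point vortex at $y$, and one computes directly that $\int_{\R^2}\nabla\varphi(x)\cdot K(x-y)\,dx = 0$ and $\int_{\R^2} {}^\perp\nabla\varphi(x)\cdot K(x-y)\,dx = -2\pi\varphi(y)$; these are the standard point-vortex identities, following from the fact that $\frac{1}{2\pi}K(x-y)$ is the curl of the Newtonian potential $-\frac{1}{2\pi}\log|x-y|$, i.e.\ $\mathrm{curl}$ applied to the fundamental solution of $-\triangle$. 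Integrating these against $d\w(y)$ yields \eqref{velocity1} and \eqref{velocity2} respectively.

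\textbf{The main obstacle}, as the authors themselves flag, is justifying the Fubini/Tonelli interchange in the displayed double integral above, since the integrand is not sign-definite and the measure is only $\sigma$-finite and possibly unbounded. The clean way is to verify absolute integrability of the full integrand $(x,y)\mapsto |\nabla\varphi(x)|\,|x-y|^{-1}$ with respect to $dx\otimes d\w(y)$. Since $\varphi$ has compact support, say in $B(0,R_0)$, the $x$-integration is over a bounded set, and one needs a uniform-in-$y$ bound on $\int_{B(0,R_0)} |\nabla\varphi(x)|\,|x-y|^{-1}\,dx$ that is integrable against $(1+|y|)^{-1}\,d\w(y)$; this is precisely the potential-theoretic estimate supplied by Corollary~\ref{lem-setG_1}, which I would invoke to dominate the integrand and thereby legitimize the interchange. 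Once Fubini is justified the remainder is the routine point-vortex computation, so the entire weight of the argument rests on that single integrability estimate.
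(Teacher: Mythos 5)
Your proposal is correct and follows essentially the same route as the paper: establish the absolute integrability estimate $\int_{B(0,R)}\int_{\CC}|x-y|^{-1}\,d\w(y)\,dx<\8$ (the content of Corollary \ref{lem-setG_1}), use it both for $v\in L^1_{loc}$ and to license Fubini, and then reduce to the classical point-vortex identities, which the paper verifies by exactly the $\e$-ball excision and integration by parts you allude to. The only difference is cosmetic: your direct near-field/far-field splitting of the $y$-integral gives an elementary, self-contained proof of that integrability estimate, whereas the paper derives it from the spherical-average identity of Corollary \ref{cor1}; both are valid.
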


\begin{proof}
Let us prove \eqref{velocity2}. We have
\begin{align*}
&\frac{1}{2\pi}\int_{\R^2}{^\perp\nabla} \varphi (x)\cdot v(x) \, dx \\
=& \frac{1}{2\pi}\int_{\R^2}{^\perp\nabla} \varphi (x)\int_{\R^2}\frac{(x-y)^\perp}{|x-y|^2}
d\w(y) dx.
\end{align*}
First, we notice that measure $\w(x)$, satisfies assumptions of Corollary \ref{lem-setG_1}, so the latter can be applied to show that $v$ given by \eqref{Biot_Savartplus} satisfies $v\in L^1_{loc}$, in particular, $v$ is finite a.e.
On the other hand, the same Corollary \ref{lem-setG_1}, again applied to the measure $\w(x)$, allows us to use Fubini's theorem in the integral
\begin{equation}\label{rrraz}
\int_{\R^2}\int_{\R^2}\partial_i \varphi(x)\, \frac{y_j - x_j}{\abs{y - x}^2} \, d\w(y) \, dx,
\end{equation}
where $i,j \in \{1,2\}$. Indeed, there exist $R_0, M$ such that $\textrm{supp}\, \varphi \subset B(0,R_0)$ and
 $\norm{\partial_i \varphi}_\8 \le M$, it is enough to make sure that $\int_{B(0,R)}\int_\CC \frac{d\w_i(y)}{|y-x|}dx<\8$. But this is exactly \eqref{L1osz}, the main claim of Corollary \ref{lem-setG_1}.

Knowing that the Fubini theorem can be applied below, the rest of the reasoning is fully standard. We have
\begin{align*}
 \int_{\R^2} {^\perp\nabla} \varphi\cdot v \, dx &= \frac{1}{2\pi}\int_{\R^2}\int_{\R^2}{^\perp\nabla} \varphi(x)\cdot{^\perp\nabla} \ln |x-y|  d \w(y)\, dx \\
 &\stackrel{Fubini}{=} \frac{1}{2\pi}\int_{\R^2}\int_{\R^2}\nabla \varphi(x)\cdot\nabla \ln |x-y| d x\, d\w(y)=\frac{1}{2\pi}\int_{\R^2} F(y) \, d\w(y),
\end{align*}
where $F(y) = \int_{\R^2}\nabla \varphi(x)\cdot\nabla \ln |x-y| dx$. Let $\e> 0$,
    $$
    F(y) = \int_{B(y,\e)} \nabla \varphi(x)\cdot\nabla \ln |x-y| dx +\int_{\R^2 \setminus B(y,\e)} \nabla \varphi(x) \cdot\nabla \ln |x-y| dx:= F_1(y)+ F_2(y).
    $$
Obviously, $F_1(y) \leq C \e \norm{\nabla \varphi}_\8$. Denote by $\nu$ the inward normal unit vector on $\partial B(y,\e)$. By integrating by parts,
\begin{equation*}
\begin{split}
    F_2(y) &= -\int_{\R^2 \setminus B(y,\e)} \varphi(x) \Delta (\ln|x-y|)\, dx  + \int_{\partial B(y,\e)} \varphi(x) \frac{\partial}{\partial \nu} \ln|x-y|\, dl(x)\\
    &= -\e^{-1} \int_{\partial B(y,\e)} \varphi(x)\, dl(x) \to -2\pi \varphi(y).
\end{split}
\end{equation*}

The proof of \eqref{velocity1} is analogous. In the last step there we use $\frac{\partial}{\partial \nu} (\ln|x-y|)^\perp =0$. 

\hfill $\Box$
\end{proof}

Let us conclude this section with a remark concerning higher integrability of velocity $v$ given by \eqref{Biot_Savart} if $\w$ satisfies \eqref{wazne} with $\a\in(0,1)$. First, observe that in view of Proposition \ref{sprawdz}, \eqref{wazne} with $\a\in(0,1)$ implies \eqref{mu-bdd2}. Next, a consequence of Theorem \ref{mainthm1}, which we prove in Section \ref{kinen}, is a higher integrability of $v$ associated with $\w$ via \eqref{Biot_Savart}.
\begin{rem}\label{uwaga}
Let $\w$ be a nonnegative $\sigma$-finite measure which satisfies \eqref{wazne} with $\a\in(0,1)$. Then $v$ associated with $\w$ via \eqref{Biot_Savart} satisfies \eqref{velocity1} and \eqref{velocity2} and belongs to $L^2_{loc}(\R^2)$.
\end{rem}

\mysection{Kinetic energy}\label{kinen}

This section is devoted to the proof of the main result. We prove Theorem \ref{averages}, which
yields a formula representing radial averages of the square of the divergence-free velocity
associated with the vorticity being nonnegative $\sigma$-finite measure satisfying \eqref{mu-bdd2}. Moreover, we show how to infer Theorem \ref{mainthm1} from our moment representation
formula. As a consequence we obtain a precise estimate (from below and above) of the kinetic energy contained
in a ball $B(0,r)$ carried by a vorticity satisfying \eqref{wazne}.

Finally, we also state a variational problem related to the local kinetic energy estimates and find
its lower and upper bounds. Moreover, we identify the measures at which the maximal and minimal values are taken.

The next theorem provides estimates required to obtain Theorem \ref{mainthm1} as a consequence of Theorem \ref{averages}. It also gives very precise constants in the bounds of local kinetic energy which will be used further to identify minimizers as well as maximizers of a variational problem leading to the local kinetic energy estimates.

\begin{thm}\label{averages2}
Let $\w$ be a nonnegative $\sigma$-finite measure on $\CC$ that satisfies \eqref{wazne} with $\a\in(0,1)$ and $c>0$. Then we obtain
\begin{equation}
\label{eq-av2}
   \frac{c^2}{4\pi^2}r^{2\a-2} \leq  A_r(\w) \leq   \frac{ c^2\a^2}{4\sin^2(\pi \a)}  r^{2\a-2}.
\end{equation}
\end{thm}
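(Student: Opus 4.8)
The strategy is to feed the exact-moment identity from Theorem~\ref{averages} into the scaling hypothesis~\eqref{wazne} and extract both bounds by optimizing the resulting series against the single constraint that the moments carry. I would start from
\[
4\pi^2 A_r(\w) = \sum_{n=0}^\infty r^{-2n-2} \abs{m_{r,n}(\w)}^2 + \sum_{k=1}^\infty r^{2k-2} \abs{M_{r,k}(\w)}^2,
\]
valid for a.e.\ $r>0$ since \eqref{wazne} implies \eqref{mu-bdd2} by Proposition~\ref{sprawdz}. The key observation is that the $n=0$ term is $r^{-2}\abs{m_{r,0}(\w)}^2 = r^{-2}\w(B(0,r))^2 = c^2 r^{2\a-2}$ exactly, by~\eqref{wazne}. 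Since every remaining term is nonnegative, this single term already yields the lower bound in~\eqref{eq-av2}: $4\pi^2 A_r(\w) \ge c^2 r^{2\a-2}$, i.e.\ $A_r(\w) \ge \frac{c^2}{4\pi^2} r^{2\a-2}$. So the lower bound is essentially free.

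\textbf{The upper bound.} Here the work lies. Each moment must be controlled by the measure-growth condition. First I would bound the magnitudes: for $n\ge 1$,
\[
\abs{m_{r,n}(\w)} = \left| \int_{B(0,r)} u^n \, d\w(u) \right| \le \int_{B(0,r)} \abs{u}^n \, d\w(u),
\]
and similarly $\abs{M_{r,k}(\w)} \le \int_{\CC\setminus B(0,r)} \abs{u}^{-k} \, d\w(u)$. The plan is to evaluate these radial integrals using the layer-cake / distribution-function representation together with~\eqref{wazne}. Writing $f(s) = \w(B(0,s)) = c s^\a$, the inner integral $\int_{B(0,r)} |u|^n \, d\w$ becomes, after integration by parts in the radial variable, something of the form $c\,\a \int_0^r s^{n+\a-1}\,ds = \frac{c\a}{n+\a} r^{n+\a}$ (up to the boundary term, which is handled by combining with the remaining mass). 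Thus $\abs{m_{r,n}(\w)} \le \frac{c\a}{n+\a}\,r^{n+\a}$ and, by the analogous computation on the exterior, $\abs{M_{r,k}(\w)} \le \frac{c\a}{k-\a}\,r^{\a-k}$. Substituting these into the series gives, for each term, a contribution bounded by $c^2\a^2 r^{2\a-2}$ times $\frac{1}{(n+\a)^2}$ (inner) or $\frac{1}{(k-\a)^2}$ (outer).

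\textbf{Summing the series.} What remains is the arithmetic: collecting the two families yields
\[
4\pi^2 A_r(\w) \le c^2\a^2 r^{2\a-2}\left( \sum_{n=0}^\infty \frac{1}{(n+\a)^2} + \sum_{k=1}^\infty \frac{1}{(k-\a)^2}\right) = c^2\a^2 r^{2\a-2} \sum_{n=-\infty}^\infty \frac{1}{(n+\a)^2}.
\]
The bilateral sum is the classical partial-fraction expansion $\sum_{n\in\mathbb{Z}} (n+\a)^{-2} = \pi^2/\sin^2(\pi\a)$, so dividing by $4\pi^2$ produces exactly $\frac{c^2\a^2}{4\sin^2(\pi\a)}\,r^{2\a-2}$, matching~\eqref{eq-av2}. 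I expect the main obstacle to be the rigorous justification of the radial integration-by-parts estimate for the moments when $\w$ is merely $\sigma$-finite and possibly non-radial: the bound $\abs{m_{r,n}} \le \frac{c\a}{n+\a} r^{n+\a}$ uses only the radial distribution of mass, so one must argue that among all mass distributions consistent with $\w(B(0,s)) = cs^\a$ the extremal magnitude of $\int |u|^n d\w$ is attained and does not exceed this radial value — this is immediate once one reduces to $\int |u|^n d\w(u)$ and uses the exact profile $f(s)=cs^\a$, but the bookkeeping of the boundary terms at $s=r$ (and convergence of $M_{r,k}$ at infinity, which requires $\a<1$) is the delicate part. Sharpness of the constant then hinges on these term-by-term bounds being simultaneously attainable, which connects to the minimizer/maximizer discussion promised after the theorem.
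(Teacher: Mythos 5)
Your proposal is correct and follows essentially the same route as the paper: the lower bound from the single $n=0$ term of the moment identity, the upper bound via $\abs{m_{r,n}(\w)}\le \frac{c\a}{n+\a}r^{n+\a}$ and $\abs{M_{r,k}(\w)}\le \frac{c\a}{k-\a}r^{\a-k}$ (using that the radial distribution of mass is exactly $c\a s^{\a-1}\,ds$ by \eqref{wazne}), and the bilateral sum $\sum_{n\in\mathbb{Z}}(n+\a)^{-2}=\pi^2/\sin^2(\pi\a)$. The boundary-term worry you raise is vacuous here since $s\mapsto cs^\a$ is continuous, so the pushforward of $\w$ under $u\mapsto\abs{u}$ is exactly absolutely continuous with that density.
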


\proof{
We shall use Theorem \ref{averages}. For the lower estimate, observe that
$$
    4\pi^2 A_r(\w) \geq r^{-2} m_{r,0}(\w)^2 = r^{-2} \w(B(0,r))^2 = c^2r^{2\a-2}.
$$

For the upper estimate, we notice that \eqref{wazne} allows us to estimate the moments,
\begin{align}
    \label{est_m}
    \abs{m_{r,n}(\w)} &\leq \int_{B(0,r)} |u|^n \, d\w(u) = c\int_0^r s^{n} \a s^{\a-1}\, ds = c\frac{\a}{n+\a} r^{n+\a} \quad (n\geq 0),\\
    \label{est_M}
    \abs{M_{r,k}(\w)} &\leq \int_{\CC\setminus B(0,r)} |u|^{-k}\, d\w(u) = c\int_r^\8 s^{-k} \a s^{\a-1}\, ds = c\frac{\a}{k-\a} r^{-k+\a} \quad (k\geq 1).
\end{align}
This leads to
\begin{equation*}
\begin{split}
    4\pi^2 A_r(\w) &\leq c^2\sum_{n=0}^\8 r^{-2n-2}\left(\frac{\a}{n+\a}\right)^2 r^{2n+2\a} + c^2\sum_{k=1}^\8 r^{2k-2}\left(\frac{\a}{k-\a}\right)^2 r^{-2k+2\a}\\
    &= c^2\a^2 \sum_{n=-\8}^\8 \frac{1}{(n+\a)^2}r^{2\a-2} = c^2\frac{\pi^2 \a^2}{\sin^2(\pi \a)}  r^{2\a-2}.
\end{split}
\end{equation*}
\hfill $\Box$
}

We finish this section with the proof of Theorem \ref{mainthm1} provided that Theorem \ref{averages} holds. The latter is proven in Section \ref{srednie}.

\noindent
\textbf{Proof of Theorem \ref{mainthm1}.}
We recall \eqref{radial} and see that the kinetic energy $E_r(\w)$ is given as $2\pi\int_0^r sA_s(\w)ds$. Hence, integrating the bounds in \eqref{eq-av2} in $r$, we obtain the claim of Theorem \ref{mainthm1}.

\vspace{0.3cm}
\hfill $\Box$

\subsection{Variational formulation and its minimizer and maximizer} \label{sec_ex}
In what follows we introduce a functional over a certain subset of $\sigma$-finite measures which gives the value of
local kinetic energy associated with these measures. We identify the possible extreme values and provide the extremizers.  Let us fix $\a\in(0,1)$. Then we define a set ${\cal A}$ as
\[
{\cal A}:=\{\mbox{nonnegative $\sigma$-finite measures $\w$ satisfying}\; \eqref{wazne}\;\mbox{with}\; \a\in (0,1) \}.
\]
Fix $r>0$ and using \eqref{def_energy} define the functional $E_r(\w)$ for any $\w\in{\cal A}$. We look for its minimum and maximum. The lower and upper bounds are given in Theorem \ref{mainthm1}. We show that those are actually
achieved and provide the examples of extremals.

Before proceeding with the argument let us notice that in the case of slightly more regular velocities
such a functional was used in literature to construct steady states of the incompressible 2d Euler system, see
\cite{arnold} in the case of regular solutions and \cite{turkington} in the case of vortex patches. It is related to the hamiltonian structure of the Euler system. It is not clear to us whether the same approach could work in the case of steady vortex sheets (like those introduced in \cite{DiPerna_majda}).

Denote by $\w_\infty$ and $\w_0$ respectively
\begin{equation}\label{w_infty}
    d\w_\infty(x_1+ix_2) = \frac{c\a}{2\pi} |x_1+ix_2|^{\a-2} \, dx_1\, dx_2,
\end{equation}
\begin{equation}\label{w_0}
    d\w_0(x_1+ix_2) = c\a x_1^{\a-1} \chi_{(0,\8)}(x_1)\delta_{0}(x_2)\, dx_1\, dx_2.
\end{equation}
The first one is a radially symmetric measure and the latter one is a vortex sheet supported on the half-line. In both cases they are chosen so that \eqref{wazne} holds.

First we notice that all the moments for $\w_\infty$ (except $m_{r,0}(\w_\infty)$) vanish.  Thus
$$
    A_r(\w_\infty) =  \frac{c^2}{4\pi^2}r^{2\a-2} \qquad \text{and} \qquad E_r(\w_\infty) = \frac{c^2}{{4\pi\a}} r^{2\a}.
$$
From the proof of Theorem \ref{averages2} one observes that for $\w_0$ all the upper estimates become equalities, thus
$$
    A_r(\w_0) = \frac{c^2\a^2}{4\sin^2(\pi\a)}  r^{2\a-2} \qquad \text{and} \qquad E_r(\w_0) = \frac{c^2\a \pi }{4\sin^2(\pi \a)}\, r^{2\a}.
$$
This proves that the estimates given in \eqref{mainthm1est} and \eqref{eq-av2} are optimal, meaning that $w_\infty$ and $\w_0$ are minimizer and maximizer of $E_r$ over ${\cal A}$, respectively.

\subsection{Proof of Theorem \ref{averages}}\label{srednie}

This subsection is essentially self-contained and can be read independently of the rest of the paper. We shall use the complex notation, that is $z = x+iy \in \CC = \R^2$. The goal of the section is to prove Theorem \ref{averages}. Assume that a nonnegative measure $\w$ on $\CC$ is given such that \eqref{mu-bdd2} is satisfied.

In the complex notation the velocity is given by
\begin{equation}
\label{eq-v}
    v(z) = \frac{1}{2\pi} \int_\CC \frac{i(z-u)}{|z-u|^2} \, d\w(u) = \frac{1}{2\pi} \int_\CC \frac{i}{\ov{z-u}} \, d\w(u).
\end{equation}
We assume only \eqref{mu-bdd2}. It turns out that this is enough to guarantee that \eqref{eq-v} is well-defined for a.e. $z\in \CC$. As we shall see in Corollary \ref{lem-setG_1}, $v\in L^1_{loc}$ and so $v$ is finite a.e. Moreover, we have seen in Section \ref{vv} that Corollary \ref{lem-setG_1} is an important factor of the proof of the Biot-Savart formula for vortex sheets satisfying \eqref{mu-bdd2}.

The proof of Theorem \ref{averages} splits into several steps. Let us start with the following definition, set
$$
    G = \set{r>0 \ : \ \int_{\CC} \frac{d\w(u)}{\sqrt{\abs{r^2 - |u|^2}}} < \8}.
$$
In particular, $\w(rS^1)=0$ for $r\in G$, where $S^1 = \{z\in \CC \ : \ |z|=1\}$. We show below that this set is of full measure in $(0,\infty)$. It will be essential in showing that the divergence-free velocity associated with $\w$ is well-defined a.e..
\begin{lem}\label{lem-setG}
Let $\w$ be a nonnegative $\sigma$-finite measure satisfying \eqref{mu-bdd2}. Then the Lebesgue measure of $G^{c} = (0,\8) \setminus G$ is zero.
\end{lem}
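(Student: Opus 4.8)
The plan is to show that the function
\[
r \mapsto I(r) := \int_{\CC} \frac{d\w(u)}{\sqrt{\abs{r^2 - |u|^2}}}
\]
is finite for almost every $r>0$, which is equivalent to the claim that $G^c$ is Lebesgue-null. The natural tool is Tonelli's theorem: since the integrand is nonnegative, I would integrate $I(r)$ against a weight in $r$ over a bounded interval and swap the order of integration, reducing everything to a one-dimensional radial integral that can be controlled by the hypothesis \eqref{mu-bdd2}. Concretely, it suffices to prove that for every $0<a<b<\8$ we have $\int_a^b I(r)\,dr<\8$, because a function whose integral over every compact interval is finite must itself be finite a.e.; covering $(0,\8)$ by countably many such intervals then finishes the argument.

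First I would apply Tonelli to write
\[
\int_a^b I(r)\,dr = \int_{\CC} \left( \int_a^b \frac{dr}{\sqrt{\abs{r^2 - |u|^2}}} \right) d\w(u),
\]
so the whole problem collapses to a uniform estimate on the inner integral $J(\rho) := \int_a^b \abs{r^2-\rho^2}^{-1/2}\,dr$, where $\rho = |u|$. The key observation is that $\abs{r^2 - \rho^2} = |r-\rho|\,(r+\rho)$, and the singularity at $r=\rho$ is of the integrable type $|r-\rho|^{-1/2}$. I expect the estimate to split into regimes: when $\rho$ is bounded away from $[a,b]$ there is no singularity and $J(\rho)$ is easily bounded; when $\rho$ lies in or near $[a,b]$ the factor $(r+\rho)^{-1/2}$ is comparable to a constant on the interval, so $J(\rho) \lesssim \int_a^b |r-\rho|^{-1/2}\,dr \leq C$ uniformly in $\rho$; and for large $\rho$ one gets the decay $J(\rho) \lesssim (b-a)\,(\rho^2-b^2)^{-1/2} \sim C/\rho$, which is exactly the $(1+\rho)^{-1}$ behaviour matching \eqref{mu-bdd2}. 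Thus I would prove a bound of the form $J(\rho) \leq C_{a,b}\,(1+\rho)^{-1}$ valid for all $\rho>0$, possibly excluding an irrelevant constant.

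Plugging this back in gives
\[
\int_a^b I(r)\,dr \leq C_{a,b} \int_{\CC} (1+|u|)^{-1}\,d\w(u) < \8
\]
by \eqref{mu-bdd2}, which is the desired finiteness. The main obstacle, and the only genuinely delicate point, is verifying that the constant $C_{a,b}$ in the bound $J(\rho)\leq C_{a,b}(1+\rho)^{-1}$ can indeed be taken uniform in $\rho\in(0,\8)$, in particular that the square-root singularity at $r=\rho$ does not blow up the integral when $\rho$ sits inside $[a,b]$ and that the large-$\rho$ tail genuinely decays like $\rho^{-1}$ rather than being merely bounded. Both are handled by the elementary substitution $r^2 = s$ (so $dr = \tfrac{1}{2}s^{-1/2}\,ds$) combined with the factorization of $\abs{r^2-\rho^2}$; once those two regimes are pinned down the rest is bookkeeping. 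The final statement $\w(rS^1)=0$ for $r\in G$ is immediate, since if $\w$ charged the circle $rS^1$ the integrand $\abs{r^2-|u|^2}^{-1/2}$ would be $+\8$ on a set of positive $\w$-measure, forcing $I(r)=\8$ and hence $r\notin G$.
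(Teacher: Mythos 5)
Your proposal is correct and follows essentially the same route as the paper: integrate $I(r)$ over a bounded range of $r$, apply Tonelli, and bound the resulting inner radial integral by $C_{a,b}(1+|u|)^{-1}$ so that \eqref{mu-bdd2} finishes the argument. The only (immaterial) difference is that the paper integrates $r\,I(r)\,dr$ over $(0,R)$ --- the extra weight $r$ makes the inner integral exactly computable and removes the logarithmic divergence at $\rho=0$ that forces you to keep $a>0$ and then cover $(0,\infty)$ by countably many intervals.
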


\proof{
For $R\geq 1$ we shall prove that
\begin{equation}\label{convergence}
    W=\int_0^R r \int_\CC \frac{d\w(u)}{\sqrt{\abs{r^2-|u|^2}}} \, dr <\8,
\end{equation}
which obviously implies the lemma. Let
$$
    W= W_1+W_2 = \int_{|u|<R} \int_0^R  \frac{r\, dr}{\sqrt{\abs{r^2-|u|^2}}} \, d\w(u) + \int_{|u|\geq R} \int_0^R  \frac{r\, dr}{\sqrt{\abs{r^2-|u|^2}}} \, d\w(u).
$$
By Fubini's theorem and \eqref{mu-bdd2} we have
\begin{equation*}
\begin{split}
    W_1 &= \int_{|u|<R} \int_0^{|u|}  \frac{r\, dr}{\sqrt{|u|^2-r^2}} \, d\w(u) + \int_{|u|<R} \int_{|u|}^R  \frac{r\, dr}{\sqrt{r^2-|u|^2}} \, d\w(u)\\
        &=  \int_{|u|<R} \left(|u|+\sqrt{R^2-|u|^2}\right) \, d\w(u)\\
        &\leq 2R\w(B(0,R)) <\8.
\end{split}
\end{equation*}
Similarly,
\begin{equation*}
\begin{split}
    W_2 &= \int_{|u|\geq R} \left(|u|-\sqrt{|u|^2-R^2}\right)\, d\w(u) = \int_{|u|\geq R} \frac{R^2}{|u|+\sqrt{|u|^2 - R^2}}\, d\w(u)\\
     &\leq R^2 \int_{|u|\geq R} \frac{d\w(u)}{|u|}\leq 2 R^2 \int_{\CC} \frac{d\w(u)}{1+|u|} <\8,
\end{split}
\end{equation*}
where in the last inequality we have used that $R\geq 1$.

\hfill $\Box$
}

The next lemma is crucial in our investigation. It uncovers the essential cancellations separating the outer and inner contribution to the spherical averages of kinetic energy.
\begin{lem}\label{lem-residua}
Let $u,v \in \CC$, $r>0$, $|u|,|v| \neq r$. Then
\begin{equation}
    (2\pi)^{-1} \int_0^{2\pi} \frac{d\theta}{(u-re^{i\theta})(v-re^{-i\theta})} =
    \begin{cases}
    (uv-r^2)^{-1} \quad & |u|, |v| >r,\\
    (r^2-uv)^{-1} \quad & |u|, |v| <r,\\
    0 \quad & (|u|-r)(|v|-r)<0 .
    \end{cases}
\end{equation}
\end{lem}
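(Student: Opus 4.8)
The plan is to convert the real trigonometric integral into a complex contour integral over the circle $|z|=r$ and evaluate it by the residue theorem. Setting $z = re^{i\theta}$, so that $d\theta = dz/(iz)$ and $z$ traverses $\{|z|=r\}$ once counterclockwise, I would use the identity $re^{-i\theta} = \overline{re^{i\theta}} = r^2/z$, valid on this circle. Then $v - re^{-i\theta} = (vz - r^2)/z$, and after cancelling one factor of $z$ the integral becomes
\begin{equation*}
    (2\pi)^{-1}\int_0^{2\pi} \frac{d\theta}{(u - re^{i\theta})(v - re^{-i\theta})} = \frac{1}{2\pi i}\oint_{|z|=r} \frac{dz}{(u-z)(vz - r^2)}.
\end{equation*}

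The integrand has at most two simple poles, at $z = u$ and at $z = r^2/v$; in the degenerate case $v=0$ the factor $vz - r^2$ is the nonzero constant $-r^2$ and the second pole is absent. Writing $u - z = -(z-u)$ and $vz - r^2 = v(z - r^2/v)$, a direct computation gives that the residue at $z = u$ equals $(r^2 - uv)^{-1}$ and the residue at $z = r^2/v$ equals $(uv - r^2)^{-1}$, so the two residues are exactly opposite. The remaining task is to decide which poles lie in the open disk: $z = u$ lies inside precisely when $|u| < r$, while $z = r^2/v$ lies inside precisely when $|r^2/v| < r$, i.e. when $|v| > r$.

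Assembling the cases then gives the claim by the residue theorem. If $|u|,|v| > r$, only $z = r^2/v$ is enclosed and the integral equals $(uv - r^2)^{-1}$; if $|u|,|v| < r$, only $z = u$ is enclosed and the integral equals $(r^2 - uv)^{-1}$; and in the mixed case $(|u|-r)(|v|-r) < 0$ either both poles are enclosed, in which case their opposite residues cancel, or neither is enclosed, so in both subcases the integral vanishes. The degenerate values $u = 0$ or $v = 0$ are covered by the same bookkeeping. The computation is essentially routine once the substitution is in place; the only points requiring care are the on-circle identity $\bar z = r^2/z$, the correct location of the two poles relative to the contour, and the sign check ensuring the two residues cancel in the mixed case.
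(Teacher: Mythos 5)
Your residue-theorem argument is correct: the substitution $z=re^{i\theta}$ with $re^{-i\theta}=r^2/z$ on the contour does turn the integral into $\frac{1}{2\pi i}\oint_{|z|=r}\frac{dz}{(u-z)(vz-r^2)}$, your residues $(r^2-uv)^{-1}$ at $z=u$ and $(uv-r^2)^{-1}$ at $z=r^2/v$ are right, the pole-location bookkeeping ($z=u$ inside iff $|u|<r$, $z=r^2/v$ inside iff $|v|>r$) is right, and the hypothesis $|u|,|v|\neq r$ is exactly what keeps the poles off the contour. This is, however, a genuinely different route from the paper's proof, which avoids contour integration entirely: there one expands $\frac{1}{z-re^{i\theta}}$ as a geometric series in $z/(re^{i\theta})$ when $|z|<r$ and in $re^{i\theta}/z$ when $|z|>r$, multiplies the two expansions, and integrates term by term using the orthogonality $\int_0^{2\pi}e^{ik\theta}\,d\theta=2\pi\delta_{k0}$; the mixed case vanishes because all the resulting frequencies $n+m$ are strictly positive. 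The paper explicitly notes after the proof that the residue theorem gives an alternative argument, so your approach is anticipated and legitimate. What the series proof buys is that the same two expansions are reused verbatim in the proof of Theorem \ref{averages} to generate the moments $m_{r,n}$ and $M_{r,k}$, so it makes the Fourier-mode cancellation structure behind the whole moment formula visible; what your proof buys is brevity and a cleaner uniform treatment of all three cases, at the cost of hiding that structure. If you adopt the residue route you should still say a word about why term-by-term manipulations are unnecessary, which you effectively do; the only detail worth adding is that the cancellation of $z$ after the substitution requires $z\neq 0$, which holds on the contour, and that the degenerate case $v=0$ is handled exactly as you describe.
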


\proof{
In the region  $|z|<r$ we have
\begin{equation}\label{exp1}
    \frac{1}{z-re^{i\theta}} = -\frac{1}{re^{i\theta}}\frac{1}{1-\frac{z}{re^{i\theta}}}=-\sum_{n=1}^\8 \frac{z^{n-1}}{r^n} e^{-in\theta},
\end{equation}
whereas for $|z|>r$ we have
\begin{equation}\label{exp2}
    \frac{1}{z-re^{i\theta}} = \sum_{n=0}^\8 \frac{r^n}{z^{n+1}} e^{in\theta}.
\end{equation}
Consider first the case $|u|,|v|<r$. Then in light of \eqref{exp1} one obtains
\begin{equation*}
\begin{split}
    (2\pi)^{-1} \int_0^{2\pi} \frac{d\theta}{(u-re^{i\theta})(v-re^{-i\theta})} &= (2\pi)^{-1}\sum_{n,m=1}^\8 \frac{u^{n-1} v^{m-1}}{r^{n+m}}  \int_0^{2\pi} e^{i(m-n)\theta} \,d\theta \\
    &= \sum_{n=1}^\8 \frac{(uv)^{n-1}}{r^{2n}} = \frac{1}{r^2-uv}\;.
\end{split}
\end{equation*}
In the case when $|u|, |v|>r$ the proof follows similarly using \eqref{exp2}.

Finally, let us consider the case $|u|<r, |v|>r$. Here we observe the crucial cancellations.
\[
(2\pi)^{-1} \int_0^{2\pi} \frac{d\theta}{(u-re^{i\theta})(v-re^{-i\theta})}=-\sum_{n=1,m=0}^{\8}\frac{u^{n-1}}{r^n}\frac{r^m}{v^{m+1}}\int_0^{2\pi}e^{-i(n+m)\theta}\,d\theta.
\]
Since for $n\geq 1, m\geq 0$ we have $n+m>0$, the integral on the right-hand side of the above identity is zero. Hence the claim follows.

\hfill $\Box$
}

Let us notice that Lemma \ref{lem-residua} can also be proved by the residue theorem. In a special case $v=\ov{u}$  we get the following.

\begin{cor}\label{cor1}
For $|u|\neq r$ we have
$$
    (2\pi)^{-1} \int_0^{2\pi} \frac{d\theta}{\abs{u-re^{i\theta}}^2} = \abs{r^2-|u|^2}^{-1}.
$$
\end{cor}

Moreover, using the Cauchy-Schwarz inequality, Corollary \ref{cor1}, and the definition of the set $G$, we arrive at the following.

\begin{cor}\label{cor2}
For a nonnegative $\sigma$-finite measure $\w$, which satisfies \eqref{mu-bdd2}, and $r\in G$ we have
$$
     \int_\CC \int_\CC  \int_0^{2\pi} \frac{d\theta}{\abs{u-re^{i\theta}}\abs{\ov{w}-re^{-i\theta}}} \, d\w(u) \, d\w(w) <\8.
$$
\end{cor}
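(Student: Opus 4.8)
The plan is to control the triple integral by splitting the angular factor via the Cauchy--Schwarz inequality in the $\theta$ variable and then recognizing each resulting single-variable angular integral as the expression evaluated in Corollary \ref{cor1}. First I would write, for fixed $u, w \in \CC$ with $|u|, |w| \neq r$,
\begin{align*}
\int_0^{2\pi} \frac{d\theta}{\abs{u - re^{i\theta}}\,\abs{\ov{w} - re^{-i\theta}}}
&\le \left( \int_0^{2\pi} \frac{d\theta}{\abs{u - re^{i\theta}}^2} \right)^{1/2}
\left( \int_0^{2\pi} \frac{d\theta}{\abs{\ov{w} - re^{-i\theta}}^2} \right)^{1/2}.
\end{align*}
Observing that $\abs{\ov{w} - re^{-i\theta}} = \abs{w - re^{i\theta}}$ (taking complex conjugates inside the modulus, and substituting $\theta \mapsto -\theta$ in the integral if desired), both factors are exactly of the form handled by Corollary \ref{cor1}. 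Hence each single angular integral equals $2\pi \abs{r^2 - |u|^2}^{-1}$ and $2\pi \abs{r^2 - |w|^2}^{-1}$ respectively, giving the pointwise bound
\begin{align*}
\int_0^{2\pi} \frac{d\theta}{\abs{u - re^{i\theta}}\,\abs{\ov{w} - re^{-i\theta}}}
\le \frac{2\pi}{\sqrt{\abs{r^2 - |u|^2}}\,\sqrt{\abs{r^2 - |w|^2}}}.
\end{align*}

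Next I would integrate this bound against $d\w(u)\,d\w(w)$. Since the right-hand side factorizes as a product of a function of $u$ and a function of $w$, the double integral factorizes (by Tonelli's theorem, everything being nonnegative) into the square of a single integral:
\begin{align*}
\int_\CC \int_\CC \frac{2\pi\, d\w(u)\, d\w(w)}{\sqrt{\abs{r^2 - |u|^2}}\,\sqrt{\abs{r^2 - |w|^2}}}
= 2\pi \left( \int_\CC \frac{d\w(u)}{\sqrt{\abs{r^2 - |u|^2}}} \right)^2.
\end{align*}
The finiteness of the inner integral is precisely the defining condition of the set $G$, so for $r \in G$ this last quantity is finite, which yields the claim. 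To make the interchange of the $\theta$-integration with the $\w$-integrations rigorous, I would note that the integrand is nonnegative and measurable, so Tonelli's theorem applies directly and no integrability hypothesis beyond $r \in G$ is needed.

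The only genuinely delicate point — and the step I expect to require the most care — is the identification $\abs{\ov{w} - re^{-i\theta}}^2 = \abs{w - re^{i\theta}}^2$ together with the harmless change of variables $\theta \mapsto -\theta$ (which preserves the $[0,2\pi]$ domain of integration up to the periodicity of the integrand), ensuring that the second factor really reduces to the form of Corollary \ref{cor1} in the variable $w$ rather than $\ov w$. Everything else is routine: Cauchy--Schwarz in $L^2(d\theta)$, the explicit evaluation from Corollary \ref{cor1}, and Tonelli to factorize. Since $\w(rS^1) = 0$ for $r \in G$, the condition $|u|, |w| \neq r$ holds $\w \otimes \w$-almost everywhere, so the pointwise bound is valid almost everywhere and the integration goes through without further qualification.
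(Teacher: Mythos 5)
Your proposal is correct and follows exactly the route the paper indicates (the paper only sketches it, saying the corollary follows from Cauchy--Schwarz, Corollary \ref{cor1}, and the definition of $G$): Cauchy--Schwarz in $\theta$, the identity $\abs{\ov{w}-re^{-i\theta}}=\abs{w-re^{i\theta}}$ to reduce both factors to Corollary \ref{cor1}, and Tonelli to factorize the double integral into the square of the quantity defining $G$. Nothing is missing.
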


Corollary \ref{cor2} is essential in the proof of Theorem \ref{averages}. It justifies an application of the Fubini theorem in a crucial moment.

Before proceeding with the proof of Theorem \ref{averages}, let us state the next corollary, which on the one hand, guarantees that $v$, as defined in \eqref{eq-v}, is finite a.e., on the other hand gives a strong estimate
which is used in Section \ref{vv} to state a general version of the Biot-Savart law.

\begin{cor}\label{lem-setG_1}
Let $R>0$. For a nonnegative $\sigma$-finite measure $\w$, which satisfies \eqref{mu-bdd2}, there holds
\begin{equation}\label{L1osz}
     \int_{B(0,R)} \int_\CC \frac{d\w(u)}{|u-z|} \, dz <\8.
\end{equation}
In particular, $v$ defined by \eqref{eq-v}, satisfies $v\in L^1_{loc}$ and so $v<\8$ a.e..
\end{cor}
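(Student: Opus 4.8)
The plan is to prove \eqref{L1osz} by Tonelli's theorem together with a pointwise estimate of the inner spatial integral, and then read off the $L^1_{loc}$ bound for $v$ as an immediate consequence. Since the integrand $|u-z|^{-1}$ is nonnegative and $\w$ is $\sigma$-finite, Tonelli permits exchanging the order of integration:
\[
\int_{B(0,R)} \int_\CC \frac{d\w(u)}{|u-z|} \, dz = \int_\CC g(u)\, d\w(u), \qquad g(u) := \int_{B(0,R)} \frac{dz}{|u-z|}.
\]
Everything then reduces to a bound on $g(u)$ compatible with the weight in \eqref{mu-bdd2}, namely $g(u) \le C_R\,(1+|u|)^{-1}$ for a constant $C_R$ depending only on $R$.

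First I would estimate $g(u)$ in two regimes. When $|u|\le 2R$, the factor $|u-z|^{-1}$ carries only an integrable singularity in the plane: enlarging $B(0,R)$ to the concentric-at-$u$ ball $B(u,3R)$ (valid since $|u|+R\le 3R$) and passing to polar coordinates gives $g(u)\le \int_{B(u,3R)} |u-z|^{-1}\,dz = 6\pi R$, while $(1+|u|)^{-1}\ge (1+2R)^{-1}$ on this range. When $|u|>2R$, every $z\in B(0,R)$ satisfies $|u-z|\ge |u|-R>|u|/2$, so $g(u)\le 2\pi R^2/|u|$, which decays like $|u|^{-1}$. Combining the two regimes yields $g(u)\le C_R\,(1+|u|)^{-1}$ for all $u$, with $C_R$ an explicit function of $R$.

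With the pointwise bound in hand I would conclude
\[
\int_\CC g(u)\, d\w(u) \le C_R \int_\CC \frac{d\w(u)}{1+|u|} < \8
\]
by \eqref{mu-bdd2}, which is precisely \eqref{L1osz}. For the final assertion, the definition \eqref{eq-v} gives the pointwise majorization $|v(z)| \le \frac{1}{2\pi} \int_\CC |z-u|^{-1}\, d\w(u)$; integrating over $B(0,R)$ and invoking \eqref{L1osz} shows $v\in L^1(B(0,R))$ for every $R$, hence $v\in L^1_{loc}(\CC)$. In particular the majorant is finite for a.e. $z$, so the integral defining $v(z)$ converges absolutely a.e. and $v<\8$ a.e.

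I do not expect a genuine obstacle: the whole argument is Tonelli plus an elementary size estimate. The only delicate point — and really the sole place where the hypothesis is used in an essential way — is that the far-field decay $g(u)\sim |u|^{-1}$ matches exactly the weight $(1+|u|)^{-1}$ in \eqref{mu-bdd2}, so that this minimal integrability assumption on $\w$ is already enough to close the estimate; tracking the $R$-dependence of $C_R$ is the only bookkeeping required.
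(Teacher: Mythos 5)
Your argument is correct, but it takes a genuinely different route from the paper's. You exchange the order of integration by Tonelli and then estimate the inner spatial integral $g(u)=\int_{B(0,R)}|u-z|^{-1}\,dz$ pointwise in two regimes, obtaining $g(u)\le C_R(1+|u|)^{-1}$ and closing the estimate directly against \eqref{mu-bdd2}; each step (the inclusion $B(0,R)\subset B(u,3R)$ for $|u|\le 2R$, the polar-coordinate bound $6\pi R$, the far-field bound $2\pi R^2/|u|$) checks out. The paper instead writes the $z$-integral in polar coordinates, applies the Cauchy--Schwarz inequality to the angular integral $\int_0^{2\pi}|u-re^{i\theta}|^{-1}\,d\theta$, invokes the exact angular-average identity of Corollary \ref{cor1} to reduce to $\int_0^R r\int_\CC |r^2-|u|^2|^{-1/2}\,d\w(u)\,dr$, and then quotes the bound \eqref{convergence} already established in Lemma \ref{lem-setG}. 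The paper's route is economical within its own architecture, since it recycles the spherical-average machinery that is needed anyway for Theorem \ref{averages}; your route is self-contained and more elementary, yields an explicit constant $C_R$, and makes transparent that the weight $(1+|u|)^{-1}$ in \eqref{mu-bdd2} is precisely the far-field decay rate of $u\mapsto\int_{B(0,R)}|u-z|^{-1}\,dz$, so the hypothesis is exactly what is needed for this conclusion. The final deduction that $v\in L^1_{loc}$ and is finite a.e.\ is identical in both treatments.
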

\proof{
By H\"{o}lder's inequality we have
\[
\int_0^{2\pi}\frac{d\theta}{|u-re^{i\theta}|}\leq (2\pi)^{1/2}\left(\int_0^{2\pi}\frac{d\theta}{|u-re^{i\theta}|^2}\right)^{1/2},
\]
hence by Fubini's theorem
\[
\int_{B(0,R)} \int_\CC \frac{d\w(u)}{|u-z|} \, dz\leq\int_0^R r\int_\CC (2\pi)^{1/2}\left(\int_0^{2\pi}\frac{d\theta}{|u-re^{i\theta}|^2}\right)^{1/2}d\w(u)dr.
\]
According to Corollary \ref{cor1}, the latter equals
\[
\int_0^R r\int_\CC 2\pi \frac{d\w(u)}{\sqrt{|r^2-|u|^2|}}dr,
\]
which is finite by \eqref{convergence}. Hence, \eqref{L1osz} is proven,
which immediately guarantees that $v\in L^1_{loc}$.

\hfill $\Box$
}

\vspace{0.5cm}
\noindent
\textbf{Proof of Theorem \ref{averages}.}
Assume that $r\in G$. By \eqref{eq-v},
\begin{equation*}
\begin{split}
    A_r(\w)& = (2\pi)^{-1} \int_0^{2\pi} \abs{v(re^{i\theta})}^2 \, d\theta =  (2\pi)^{-3} \int_0^{2\pi} \abs{\int_\CC \frac{i d\w(u)}{re^{-i\theta}-\ov{u}}}^2 \, d\theta\\
    &= (2\pi)^{-3} \int_\CC \int_\CC \int_0^{2\pi} \frac{1}{(u-re^{i\theta})(\ov{w}-re^{-i\theta})}\, d\theta\, d\w(u) \, d\w(w).
\end{split}
\end{equation*}
In the last equality we have used Fubini's theorem, see Corollary \ref{cor2}. Applying Lemma \ref{lem-residua} we obtain
\begin{equation*}
\begin{split}
    4\pi^2 A_r(\w) =& \int_{B(0,r)} \int_{B(0,r)} \frac{1}{r^2-u\ov{w}}\, d\w(u)\, d\w(w) + \int_{\CC \setminus B(0,r)} \int_{\CC \setminus B(0,r)} \frac{1}{u\ov{w} - r^2} \, d\w(u)\, d\w(w)\\
    =& \int_{B(0,r)} \int_{B(0,r)} \sum_{n=0}^\8 \frac{u^n \ov{w}^n}{r^{2n+2}}\, d\w(u)\, d\w(w) \\
    &+ \int_{\CC \setminus B(0,r)} \int_{\CC \setminus B(0,r)} \sum_{k=0}^\8 \frac{r^{2k}}{u^{k+1} \ov{w}^{k+1}} \, d\w(u)\, d\w(w)\\
    =&\sum_{n=0}^\8 r^{-2n-2} m_{r,n}(\w) \ov{m_{r,n}(\w)} + \sum_{k=1}^\8 r^{2k-2} M_{r,k}(\w) \ov{M_{r,k}(\w)},
\end{split}
\end{equation*}
This ends the proof of Theorem  \ref{averages}, provided that we justify the last equality. It suffices to have
\[
I_{1}=\int_{B(0,r)}\int_{B(0,r)}\sum_{n=0}^\8 \frac{|u|^n |w|^n}{r^{2n+2}}\, d\w(u)\, d\w(w) < \8
\]
and
\[
I_{2}=\int_{\CC \setminus B(0,r)} \int_{\CC \setminus B(0,r)} \sum_{k=0}^\8 \frac{r^{2k}}{|u|^{k+1} |w|^{k+1}} \, d\w(u)\, d\w(w) < \8.
\]
To this end, notice that
\begin{equation*}
\begin{split}
I_{1}&=\int_{B(0,r)}\int_{B(0,r)} \frac{d\w(u)\, d\w(w)}{r^{2}-|u||w|}
\leq \int_{B(0,r)}\int_{B(0,r)} \frac{d\w(u)\, d\w(w)}{\sqrt{r^{2}-|u|^{2}}\sqrt{r^{2}-|w|^{2}}}\\&
=\left( \int_{B(0,r)}\frac{d\w(u)}{\sqrt{r^{2}-|u|^{2}}}\right)^{2},
\end{split}
\end{equation*}
which is finite since $r \in G$. We have used the fact that $(r^{2}-|u||w|)^{2} \geq (r^{2}-|u|^{2})(r^{2}-|w|^{2})$.
In a~similar way,
\begin{equation*}
\begin{split}
I_{2}&=\int_{\CC \setminus B(0,r)} \int_{\CC \setminus B(0,r)} \frac{d\w(u)\, d\w(w)}{|u||w|-r^{2}} \leq
\int_{\CC \setminus B(0,r)} \int_{\CC \setminus B(0,r)} \frac{d\w(u)\, d\w(w)}{\sqrt{|u|^{2}-r^{2}}\sqrt{|w|^{2}-r^{2}}}\\&
=\left( \int_{\CC \setminus B(0,r)} \frac{d\w(u)}{\sqrt{|u|^{2}-r^{2}}}\right)^{2} < \infty.
\end{split}
\end{equation*}

\hfill $\Box$
\subsection{Extension to vorticities}

For future issues let us at this point extend the moment formula to {\it vorticities}. It will be useful when studying limits of Kaden's spiral when time approaches $0$ or $\infty$. It seems to us that it can be applicable in many other situations concerning convergence or simply computation of the difference of velocities based on the difference of their vorticities.

Consider $\tilde{\w}=\w_1-\w_2$, where $\w_i$ satisfy \eqref{mu-bdd2}. Notice that all the moments \eqref{moment0}--\eqref{moment2} for either $\w_1$ or $\w_2$ are finite, thus we can define same moments for $\tilde{\w}$ as a difference of the moments for $\w_1$ and $\w_2$. Completely the same argument as in Lemma
\ref{lem-setG} leads us to the claim that the set $G_{sign}=G_1\cap G_2\subset (0,\infty)$,
\[
G_i= \set{r>0 \ : \ \int_{\CC} \frac{d\w_i(u)}{\sqrt{\abs{r^2 - |u|^2}}} < \8}\;,\; i=1,2,
\]
is of full measure.

Next, proceeding in the same way as in the proof of Corollary \ref{cor2}, we arrive at
\begin{cor}\label{cor3}
Let $\tilde{\w}=\w_1-\w_2$ be such that \eqref{mu-bdd2} is satisfied. Assume moreover that $r\in G_{sign}$. Then for $k,l\in\{1,2\}$ we have
$$
     \int_\CC \int_\CC  \int_0^{2\pi} \frac{d\theta}{\abs{u-re^{i\theta}}\abs{\ov{w}-re^{-i\theta}}} \, d\w_k(u) \, d\w_l(w) <\8.
$$
\end{cor}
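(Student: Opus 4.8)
The plan is to mirror the proof of Corollary \ref{cor2} essentially verbatim, the only genuine difference being that the Cauchy--Schwarz step now decouples the two \emph{distinct} measures $\w_k$ and $\w_l$, each of which is controlled by the corresponding condition defining $G_{sign}$.

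First I would fix $k,l\in\{1,2\}$ and $r\in G_{sign}$, and apply the Cauchy--Schwarz inequality to the inner integral in $\theta$:
\[
\int_0^{2\pi} \frac{d\theta}{\abs{u-re^{i\theta}}\abs{\ov{w}-re^{-i\theta}}} \leq \left(\int_0^{2\pi} \frac{d\theta}{\abs{u-re^{i\theta}}^2}\right)^{1/2}\left(\int_0^{2\pi} \frac{d\theta}{\abs{\ov{w}-re^{-i\theta}}^2}\right)^{1/2}.
\]
Next I would use the elementary identity $\abs{\ov{w}-re^{-i\theta}} = \abs{w-re^{i\theta}}$ (conjugation preserves modulus) to rewrite the second factor, and then invoke Corollary \ref{cor1} to evaluate both $\theta$-integrals explicitly. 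This yields, for $|u|\neq r$ and $|w|\neq r$, the pointwise bound
\[
\int_0^{2\pi} \frac{d\theta}{\abs{u-re^{i\theta}}\abs{\ov{w}-re^{-i\theta}}} \leq \frac{2\pi}{\sqrt{\abs{r^2-|u|^2}}\sqrt{\abs{r^2-|w|^2}}}.
\]

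Integrating this bound against $d\w_k(u)\,d\w_l(w)$ and using Tonelli's theorem (the integrand is nonnegative, so the iterated integral equals the product integral), the double integral factorizes:
\[
\int_\CC \int_\CC \int_0^{2\pi} \frac{d\theta}{\abs{u-re^{i\theta}}\abs{\ov{w}-re^{-i\theta}}} \, d\w_k(u)\, d\w_l(w) \leq 2\pi \left(\int_\CC \frac{d\w_k(u)}{\sqrt{\abs{r^2-|u|^2}}}\right)\left(\int_\CC \frac{d\w_l(w)}{\sqrt{\abs{r^2-|w|^2}}}\right).
\]
Finally, since $r\in G_{sign}=G_1\cap G_2$, each of the two factors is finite by the very definition of $G_k$ and $G_l$, so the product, and hence the triple integral, is finite.

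There is no real obstacle here: the argument is the one already carried out for a single measure in Corollary \ref{cor2}. The only point requiring a word of care is that the Cauchy--Schwarz split decouples the $u$- and $w$-integrations, so that the two resulting factors are governed by possibly \emph{different} measures; this is precisely why one needs $r$ to lie in the intersection $G_1\cap G_2$ rather than in a single $G_i$, which guarantees finiteness simultaneously for both index choices $k$ and $l$.
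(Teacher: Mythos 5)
Your argument is correct and is exactly the route the paper takes: the paper proves Corollary \ref{cor3} by ``proceeding in the same way as in the proof of Corollary \ref{cor2}'', i.e.\ Cauchy--Schwarz on the $\theta$-integral, Corollary \ref{cor1} to evaluate the two factors, and the definition of $G_{sign}=G_1\cap G_2$ to guarantee finiteness of both resulting integrals. Your observation that the decoupled factors involve possibly different measures, which is precisely why the intersection $G_1\cap G_2$ is needed, is the only point of substance and you have it right.
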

Corollary \ref{cor3} allows us to use Fubini's theorem in the present context and this way extend Theorem
\ref{averages} to its version for {\it vorticities}. First, notice that due to linearity in $\w$ of the Biot-Savart operator as well as $m_{r,n}$ and $M_{r,k}$, one immediately sees how to understand quantities occurring in \eqref{averages_main}. Moreover, by linearity, the proof of Theorem \ref{averages} goes in the same way also for $\tilde{\w}$. The following holds.
\begin{cor}\label{3.8}
Let $\tilde{\w}=\w_1-\w_2$, where nonnegative $\sigma$-finite measures $\w_1$ and $\w_2$ satisfy \eqref{mu-bdd2}. Then $\w$ satisfies \eqref{averages_main}.
\end{cor}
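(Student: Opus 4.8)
The plan is to reproduce the proof of Theorem \ref{averages} essentially verbatim, with the single measure $\w$ replaced by the signed object $\tilde\w=\w_1-\w_2$, letting linearity carry the bookkeeping and letting Corollary \ref{cor3} play exactly the role that Corollary \ref{cor2} played in the nonnegative case. First I would record that, by linearity of the Biot--Savart operator \eqref{eq-v}, the velocity attached to $\tilde\w$ is $v=v_1-v_2$, where $v_i$ is the velocity of $\w_i$; since each $v_i\in L^1_{loc}$ by Corollary \ref{lem-setG_1}, the difference $v$ is finite a.e., so that $A_r(\tilde\w)$ and the moments $m_{r,n}(\tilde\w)$, $M_{r,k}(\tilde\w)$ (defined as the corresponding differences) all make sense.

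Next, fixing $r\in G_{sign}$ and following the opening computation in the proof of Theorem \ref{averages}, I would write
$$
4\pi^2 A_r(\tilde\w)=(2\pi)^{-1}\int_\CC\int_\CC\int_0^{2\pi}\frac{d\theta}{(u-re^{i\theta})(\ov{w}-re^{-i\theta})}\,d\tilde\w(u)\,d\tilde\w(w).
$$
The interchange of the $\theta$-integral with the double integral against $\tilde\w$ is the only genuinely new point. To justify it I expand $d\tilde\w(u)\,d\tilde\w(w)$ into the four nonnegative products $d\w_k(u)\,d\w_l(w)$, $k,l\in\{1,2\}$; since $|(u-re^{i\theta})(\ov w-re^{-i\theta})|=|u-re^{i\theta}|\,|\ov w-re^{-i\theta}|$, Corollary \ref{cor3} guarantees that each of the four resulting triple integrals converges absolutely for $r\in G_{sign}$. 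Hence Fubini applies to each piece, and recombining by linearity legitimizes the interchange for $\tilde\w$ itself.

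I then apply Lemma \ref{lem-residua} to the inner $\theta$-integral exactly as before: the residue value depends only on the position of $|u|,|w|$ relative to $r$ and vanishes on the mixed regions, independently of any sign structure, so it splits the domain cleanly and yields
$$
4\pi^2 A_r(\tilde\w)=\int_{B(0,r)}\int_{B(0,r)}\frac{d\tilde\w(u)\,d\tilde\w(w)}{r^2-u\ov w}+\int_{\CC\setminus B(0,r)}\int_{\CC\setminus B(0,r)}\frac{d\tilde\w(u)\,d\tilde\w(w)}{u\ov w-r^2}.
$$
Expanding the two kernels into geometric series and interchanging sum and integral — valid because the $I_1$- and $I_2$-type bounds of the proof of Theorem \ref{averages} hold for each of the four measure pairs, the relevant finiteness being that of $\int d\w_i(u)/\sqrt{|r^2-|u|^2|}$ on $G_i\supset G_{sign}$ — and finally using the linearity of the complex moments, so that $m_{r,n}(\tilde\w)=m_{r,n}(\w_1)-m_{r,n}(\w_2)$ and likewise for $M_{r,k}$, I obtain
$$
4\pi^2 A_r(\tilde\w)=\sum_{n=0}^\8 r^{-2n-2}|m_{r,n}(\tilde\w)|^2+\sum_{k=1}^\8 r^{2k-2}|M_{r,k}(\tilde\w)|^2.
$$
Since $G_{sign}$ is of full measure, this is \eqref{averages_main} for $\tilde\w$ for a.e. $r>0$.

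The main obstacle is purely the absolute-convergence bookkeeping. Because $\tilde\w$ may have infinite positive and negative parts, one cannot integrate against a total variation $|\tilde\w|$; instead every interchange — the Fubini step in $\theta$ and the series-integral swap — must be broken into the four nonnegative pieces $\w_k\times\w_l$ and controlled piece by piece through Corollary \ref{cor3} and the $G_i$-finiteness. Once this is in place, the reassembly into moments of $\tilde\w$ is forced by linearity, and the algebra is identical to the nonnegative case.
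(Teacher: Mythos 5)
Your proposal is correct and follows essentially the same route as the paper: the authors likewise define the moments of $\tilde\w$ by linearity, invoke Corollary \ref{cor3} (in place of Corollary \ref{cor2}) to justify the Fubini step after splitting into the four nonnegative products $\w_k\times\w_l$, and then observe that the proof of Theorem \ref{averages} carries over verbatim. Your write-up merely makes explicit the piece-by-piece absolute-convergence bookkeeping that the paper leaves implicit.
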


\mysection{Applications - Kaden spirals}\label{Kaden}

In the present section our aim is to introduce and study some properties of Kaden's spirals. It turns out that the framework of moment formula we introduced in Theorem \ref{averages} is very helpful in this respect. In order to define the Kaden spiral we need to first consider the Birkhoff-Rott equation. It was introduced in \cite{birkhoff, rott} independently and it describes the time evolution of the curve $\cv$ -an interface between the zero vorticity regions
of the vortex sheet whose velocity field attains the tangential velocity discontinuity along the curve $\cv$.
Let us denote the position of the curve $\cv$ at time $t$ and cumulative vorticity $\Gamma$ by $Z(\Gamma, t)$. Then the following equation is satisfied (we refer an interested reader to the handbook \cite{mar_pul} for details).
$$
\frac{\partial}{\partial t}\overline{Z(\Gamma,t)} = \frac{1}{2\pi i} \textrm{p.v.}\int_{\R} \frac{d\Gamma'}{Z(\Gamma,t)-Z(\Gamma',t)},
$$
where $Z\, : \, \R \times (0,\8) \to \CC$. For $\mu >0$ we look for self similar solutions of the form
\begin{equation}\label{selfsimilar}
Z(\Gamma,t) = t^\mu z(\gamma), \quad \quad \gamma = t^{1-2\mu}\Gamma.
\end{equation}
Such solution in the new variables $\gamma \in \R$ and $t>0$ satisfies the equation
\begin{equation}\label{BR2}
(1-2 \mu)\gamma \,  z'(\gamma) + \mu z(\gamma) = \frac{i}{2\pi} \textrm{p.v.} \int_{\R} \frac{d\gamma'}{\overline{z(\gamma)-z(\gamma')}}\;.
\end{equation}

The construction of the Kaden spirals, introduced in \cite{kaden} and reviewed recently in \cite{elling}, is based on the following ansatz.
Arcs of spirals appearing in nature, when packed densely, are similar to arcs of a circle. Thus each arc (each $2 \pi$ turn around the origin) can be approximated by a circle. Now assume that the whole vorticity is concentrated on a circle, then if $P$ is the point inside the circle, then the velocity in $P$ generated by the vorticity is zero, while for any point lying outside the circle the velocity is the same as the velocity generated by the point vortex placed at the centre of the circle with the mass/strength/vorticity equal to the total vorticity of the circle.
Using the above approach we assume that the part of the spiral further away to the origin than considered point $z(\gamma)$ does not influence the velocity in the point, while the part closer to the origin than $z(\gamma) $ extorts the velocity
equal to
$
\frac{i\gamma}{2 \pi \cdot \overline{z(\gamma)}}.
$

Inserting conclusion of the above heuristics into \eqref{BR2} we get
$$
    (1-2 \mu)\gamma z'(\gamma) + \mu z(\gamma) = \frac{i\gamma}{2\pi \overline{z(\gamma)}}.
$$
Switching to polar coordinates, $z(\gamma) = r(\gamma)e^{i\theta(\gamma)}$, we get
$$
    (1-2\mu)\gamma\big(r'(\gamma)e^{i\theta(\gamma)} + i\theta'(\gamma) r(\gamma) e^{i\theta(\gamma)} \big) +
    \mu r(\gamma) e^{i\theta(\gamma)}= \frac{i \gamma}{2 \pi r(\gamma)e^{- i\theta(\gamma)}}.
$$
By dividing by $e^{i\theta(\gamma)}$ and separating the real and imaginary part we obtain a system of ordinary differential equations for $r$ and $\theta$, namely
\begin{equation} \label{BR5}
    \left\{ \begin{array}{l}
    {(1-2\mu)\gamma r'(\gamma) + \mu r(\gamma) = 0}, \\
    {(1-2\mu)\gamma r(\gamma) \theta'(\gamma) = \frac{\gamma}{2\pi r(\gamma)}}.
    \end{array} \right.
\end{equation}
For $\mu \neq 1/2$ the solution of \eqref{BR5} is
$$
    \left\{ \begin{array}{l}
    r(\gamma)= C_1 |\gamma|^\frac{\mu}{2 \mu - 1},\\
    \theta(\gamma) = \frac{1}{2\pi C_1^2} |\gamma|^\frac{1}{1-2\mu}+C_2.
    \end{array} \right.
$$

Further on we take $C_1=1, C_2=0$ and consider only $\gamma>0$ restricting our consideration to one representative instead of a family of curves. In polar coordinates this is a spiral given by
$$
    \theta(r) = \frac{1}{2\pi} r^{-1/\mu}, \qquad r>0.
$$

We are interested in an evolution in time of the vorticity and the velocity field. Thus we go back to the original variable $\Gamma$ (see \eqref{selfsimilar}) getting
\[
Z(\Gamma,t) = t^\mu z(t^{1-2\mu}\Gamma) = R(\Gamma,t) e^{i\Theta(\Gamma,t)} ,
\]
with
\begin{equation} \label{RofGamma}
\left\{ \begin{array}{l}
R(\Gamma,t)= \Gamma^{\frac{\mu}{2\mu-1}},\\
\Theta(\Gamma,t) = \frac{t}{2\pi} \Gamma^{\frac{1}{1-2\mu}}.
\end{array} \right.
\end{equation}

We will denote the spiral curve that is the support of the vorticity for a given time $t>0$, by $\cv_t$. In polar coordinates $\cv_t$ is given by the equation
\begin{equation}\label{polar}
\Theta(R) = \frac{t}{2\pi}  R ^{-\frac{1}{\mu}}.
\end{equation}

The measure corresponding to the vorticity for the Kaden spiral $\cv_t$ at time $t>0$ will be denoted by $\omega_t$. The support of $\w_t$ is $\cv_t$, moreover since $\Gamma$ is a cumulative vorticity in a ball of radius $|Z|$
\[
\omega_t\left(B(0,|Z(\Gamma,t)|)\right) = \Gamma,
\]
or, equivalently,
\begin{equation}\label{omega-ball-2}
    \w_t\left(B(0,r)\right) = r^{2-\frac{1}{\mu}}.
\end{equation}

Notice that the spiral concentrates at the origin and for $\mu\in (1/2,1)$ the length of $\cv_t\cap B(0,r)$ is infinite. Indeed, the following fact holds.

\begin{prop}\label{lenght}
For each $t>0$ and $\mu\in(1/2,1)$ the Kaden spiral $\cv_t$ restricted to a ball $B(0,r)$, $r>0$, has infinite length.
\end{prop}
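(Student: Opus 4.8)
The plan is to compute the length of $\cv_t\cap B(0,r)$ directly as an arc-length integral in polar coordinates and to show that it diverges at the origin. By \eqref{polar}, the part of $\cv_t$ contained in $B(0,r)$ is exactly the curve $R\mapsto\big(R,\Theta(R)\big)$ with $\Theta(R)=\frac{t}{2\pi}R^{-1/\mu}$ and $R\in(0,r]$. This is a smooth parametrization away from the origin, and since $\Theta(R)\to\8$ as $R\to 0^+$ the spiral indeed winds around $0$ infinitely many times, which is the geometric reason one expects infinite length.

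Next I would invoke the polar arc-length formula, so that the length is
\[
L=\int_0^r \sqrt{1+R^2\,\Theta'(R)^2}\,dR .
\]
From $\Theta(R)=\frac{t}{2\pi}R^{-1/\mu}$ one computes $\Theta'(R)=-\frac{t}{2\pi\mu}R^{-1/\mu-1}$, hence $R\,\Theta'(R)=-\frac{t}{2\pi\mu}R^{-1/\mu}$. Using the elementary bound $\sqrt{1+s^2}\ge \abs{s}$ with $s=R\,\Theta'(R)$ gives
\[
L\ \ge\ \int_0^r \abs{R\,\Theta'(R)}\,dR\ =\ \frac{t}{2\pi\mu}\int_0^r R^{-1/\mu}\,dR .
\]

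Finally I would observe that the exponent satisfies $1/\mu>1$ precisely when $\mu\in(1/2,1)$, in which case $1/\mu\in(1,2)$, so the improper integral $\int_0^r R^{-1/\mu}\,dR$ diverges and therefore $L=\8$. I do not expect any genuine obstacle here, as the computation is elementary once the correct parametrization near the origin is fixed; the only point deserving a little care is the improper behaviour at $R=0$. One should read $L$ as $\lim_{\e\to 0^+}\int_\e^r$, and since the integrand is nonnegative the monotone lower bound above is legitimate and passes to the limit, yielding divergence. It is worth noting that the borderline case $1/\mu=1$ (i.e.\ $\mu=1$) would give a logarithmically divergent integral as well, and that for $\mu>1$ the integral converges, so the stated range $\mu\in(1/2,1)$ is exactly where the argument forces infinite length.
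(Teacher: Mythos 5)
Your proof is correct and follows essentially the same route as the paper: both parametrize the spiral by its radius, write the length as $\int_0^r\sqrt{1+R^2\Theta'(R)^2}\,dR$, drop the $1$ under the square root to get the lower bound $\frac{t}{2\pi\mu}\int_0^r R^{-1/\mu}\,dR$, and conclude divergence from $1/\mu>1$. The only differences are cosmetic (your explicit remark on the improper integral and the borderline exponents).
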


\proof{
Let us parametrize the Kaden spiral \eqref{polar}, by
    $
    (0,\8) \ni s \mapsto s \exp\left(\frac{it}{2\pi} s^{-1/\mu}\right).
    $
Then the length of $\cv_t$ on $B(0,r)$ is given by
$$
    l\left(\cv_t\right) = \int_0^r \sqrt{1 + s^2 \frac{t^2}{(2\pi \mu)^2} s^{-2-2/\mu}}\, ds \geq \frac{t}{2\pi \mu} \int_0^r s^{-1/\mu}\, ds=\8
$$
for $\mu\in(1/2,1)$.

\hfill $\Box$
}

\begin{lem}\label{density}
Let $\cv_t$ be the Kaden spiral with $\mu\in(1/2,1)$. Then for $f\in L^1(\R^2, \w_t)$ we have
$$
    \int_{\R^2} f(x) \, d\w_t(x) = \left(2-\frac{1}{\mu}\right)\int_0^\8 f\left(s\exp\left(\frac{it}{2\pi} s^{-1/\mu}\right)\right)s^{1-1/\mu}\, ds.
$$
\end{lem}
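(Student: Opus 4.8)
The plan is to derive the density formula for $\w_t$ directly from the cumulative vorticity relation \eqref{omega-ball-2}, namely $\w_t(B(0,r)) = r^{2-1/\mu}$, by identifying how $\w_t$ pushes forward under the natural arclength-type parametrization of the spiral. First I would introduce the parametrization map $\Phi\colon (0,\8) \to \CC$ given by $\Phi(s) = s\exp\left(\frac{it}{2\pi} s^{-1/\mu}\right)$, which traces out $\cv_t$ and has the property that $|\Phi(s)| = s$. Since the support of $\w_t$ is exactly $\cv_t$, the measure $\w_t$ is the pushforward under $\Phi$ of some measure $\nu$ on the parameter interval $(0,\8)$; the task reduces to identifying $\nu$.

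The key computation is to determine $\nu$ from the radial distribution. Because $|\Phi(s)| = s$ is monotone, the event $\{x \in B(0,r)\}$ corresponds under $\Phi$ precisely to $\{s < r\}$, so the pushforward structure gives $\nu\left((0,r)\right) = \w_t(B(0,r)) = r^{2-1/\mu}$. Differentiating this in $r$ identifies the density of $\nu$ with respect to Lebesgue measure on $(0,\8)$ as
\begin{equation*}
\frac{d\nu}{ds}(s) = \left(2-\frac{1}{\mu}\right) s^{1-1/\mu}.
\end{equation*}
Then, for any $f \in L^1(\R^2,\w_t)$, the abstract change-of-variables (pushforward) formula $\int_{\R^2} f \, d\w_t = \int_0^\8 f(\Phi(s))\, d\nu(s)$ yields exactly the claimed identity
\begin{equation*}
\int_{\R^2} f(x)\, d\w_t(x) = \left(2-\frac{1}{\mu}\right)\int_0^\8 f\left(s\exp\left(\tfrac{it}{2\pi} s^{-1/\mu}\right)\right) s^{1-1/\mu}\, ds.
\end{equation*}

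The main obstacle I anticipate is making rigorous the claim that $\w_t$ is genuinely the pushforward $\Phi_*\nu$ with $\nu$ as above, rather than merely matching the radial marginals. The radial relation \eqref{omega-ball-2} only prescribes how much mass sits in each ball $B(0,r)$; to conclude the full density formula one must use that $\w_t$ is concentrated on the curve $\cv_t$ and that $\Phi$ is injective with $|\Phi(s)|=s$, so that the radial coordinate alone already determines the location on the spiral. Concretely, since every point of $\cv_t$ is uniquely recovered from its modulus via $\Phi$, the measure $\w_t$ is completely determined by its radial pushforward, and that pushforward is $\nu$. I would spell this out by noting that $f \mapsto \int f\, d\w_t$ and $f \mapsto \int f\, d(\Phi_*\nu)$ agree on functions of the form $f = g(|x|)$ (by \eqref{omega-ball-2}), and that both measures live on the graph of $\Phi$, where $|x|$ separates points; hence the two functionals coincide on all of $L^1(\R^2,\w_t)$. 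The remaining steps — differentiating $r^{2-1/\mu}$ and invoking the standard pushforward integration formula — are routine.
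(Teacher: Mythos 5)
Your proposal is correct and follows essentially the same route as the paper: parametrize $\cv_t$ by $s\mapsto s\exp\bigl(\frac{it}{2\pi}s^{-1/\mu}\bigr)$, use the support condition to write $\w_t$ as a pushforward of a measure on $(0,\8)$, test against $\chi_{B(0,r)}$ via \eqref{omega-ball-2}, and differentiate $r^{2-1/\mu}$ to obtain the density $(2-1/\mu)s^{1-1/\mu}$. Your extra remark that injectivity of the parametrization together with $|\Phi(s)|=s$ forces the radial marginal to determine $\w_t$ entirely is a point the paper leaves implicit, and it is handled correctly.
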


\proof{
We shall use the same parametrization of $\cv_t$ as in the proof of Proposition \ref{lenght}. As we know that $\w_t$ is supported on $\cv_t$ and \eqref{omega-ball-2} holds, we can find a non-negative density $g_t(s)$, such that (see \cite{rudin})
$$
    \int_{\R^2} f(x) \, d\w_t(x) = \int_0^\8 f\left(s\exp\left(\frac{it}{2\pi} s^{-1/\mu}\right)\right)g_t(s)\, ds.
$$
By taking $f=\chi_{B(0,r)}$ and using \eqref{omega-ball-2} again, we get
$$
    r^{2-1/\mu} = \int_0^r g_t(s)\, ds.
$$
Differentiating both sides we obtain $g_t(s) = (2-1/\mu) s^{1-1/\mu}$.

\hfill $\Box$
}

We end this section with a simple observation concerning kinetic energy of the considered velocity field.
\begin{prop}\label{rem_scaling}
Consider the velocity field generated by the Kaden spiral $\cv_t$ via \eqref{Biot_Savart}. Then we have the following scaling of the kinetic energy in a ball $B(0,r)$,
\[
E_r(\w_t) = t^{4\mu-2}E_{t^{-\mu}r}(\w_1).
\]
\end{prop}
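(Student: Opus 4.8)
The plan is to exploit the self-similar structure of the Kaden spiral, which forces the velocity fields at different times to be dilates of one another; once this pointwise scaling of $v$ is established, the energy identity follows by a change of variables in \eqref{def_energy}. Throughout write $\a:=2-\frac1\mu$, so that by \eqref{omega-ball-2} the measure $\w_t$ satisfies \eqref{wazne} with exponent $\a\in(0,1)$ and constant $c=1$. In particular, by Proposition \ref{sprawdz} and Corollary \ref{lem-setG_1}, the Biot-Savart velocity $v_t$ associated with $\w_t$ via \eqref{eq-v} is well-defined for a.e. $z$ and, by Remark \ref{uwaga}, lies in $L^2_{loc}$, so that $E_r(\w_t)<\8$ and the statement is meaningful.

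First I would prove the scaling identity
\begin{equation}\label{vscaling}
v_t(t^\mu \zeta) = t^{\mu-1}\, v_1(\zeta)\qquad\text{for a.e. } \zeta\in\CC.
\end{equation}
To see this, insert the density representation of Lemma \ref{density} into the Biot-Savart integral \eqref{eq-v}, giving
\[
v_t(z)=\frac{\a}{2\pi}\int_0^\8 \frac{i}{\ov{z-s\exp\!\left(\frac{it}{2\pi}s^{-1/\mu}\right)}}\, s^{\a-1}\, ds .
\]
Putting $z=t^\mu\zeta$ and substituting $s=t^\mu\sigma$, the decisive algebraic cancellation is that $(t^\mu\sigma)^{-1/\mu}=t^{-1}\sigma^{-1/\mu}$, whence the argument $\frac{t}{2\pi}(t^\mu\sigma)^{-1/\mu}=\frac{1}{2\pi}\sigma^{-1/\mu}$ no longer depends on $t$ and coincides with the angle defining $\cv_1$; this is precisely the self-similarity $\cv_t=t^\mu\cv_1$. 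Factoring $t^\mu$ out of the conjugated denominator and collecting the Jacobian $t^\mu\,d\sigma$ together with the factor $(t^\mu)^{\a-1}$ coming from $s^{\a-1}$ produces the total power $t^{-\mu+\mu(\a-1)+\mu}=t^{\mu(\a-1)}=t^{\mu-1}$, and the remaining integral is exactly $\frac{2\pi}{\a}\,v_1(\zeta)$; this is \eqref{vscaling}.

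It then remains to change variables in \eqref{def_energy}. Setting $z=t^\mu\zeta$, so that $dz=t^{2\mu}\,d\zeta$ and $B(0,r)$ is carried onto $B(0,t^{-\mu}r)$, and using \eqref{vscaling}, I would compute
\[
E_r(\w_t)=\int_{B(0,t^{-\mu}r)}\abs{t^{\mu-1}v_1(\zeta)}^2\, t^{2\mu}\, d\zeta
= t^{2(\mu-1)+2\mu}\int_{B(0,t^{-\mu}r)}\abs{v_1(\zeta)}^2\, d\zeta = t^{4\mu-2}\,E_{t^{-\mu}r}(\w_1),
\]
which is the assertion, since $2(\mu-1)+2\mu=4\mu-2$. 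The only genuinely delicate point is the legitimacy of the substitution $s=t^\mu\sigma$ inside the singular integral defining $v_t$; this is underwritten by Corollary \ref{lem-setG_1}, which guarantees that the integrand is absolutely integrable for a.e. $z$, so the change of variable is valid a.e. and \eqref{vscaling} holds pointwise a.e. Everything else is routine bookkeeping of the exponents of $t$.
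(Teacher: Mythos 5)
Your argument is correct and is essentially the paper's own proof: both rest on Lemma \ref{density} together with the two substitutions $s\mapsto t^\mu s$ in the Biot--Savart integral and $z\mapsto t^\mu z$ in the energy integral, with identical exponent bookkeeping. The only presentational difference is that you isolate the pointwise self-similarity $v_t(t^\mu\zeta)=t^{\mu-1}v_1(\zeta)$ as an intermediate identity, which the paper leaves implicit inside a single chain of equalities.
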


\proof{
Using Lemma \ref{density} and changing variables twice ($R \mapsto R t^\mu$ and $x \mapsto x t^\mu$),
\begin{equation*}
\begin{split}
4\pi^2E_r(\w_t)&= \left(2-\frac{1}{\mu}\right)^2 \int_{B(0,r)}\abs{\int_0^\infty \frac{\left(x-R\exp\left(\frac{it}{2\pi} R^{-1/\mu}\right)\right)^\perp}{\abs{x-R\exp\left(\frac{it}{2\pi} R^{-1/\mu}\right)}^2} R^{1-\frac{1}{\mu}} dR }^2\,dx\\
&= \left(2-\frac{1}{\mu}\right)^2 t^{4\mu-2} \int_{B(0,r)}\abs{\int_0^\infty \frac{\left(x-Rt^\mu \exp\left(\frac{i}{2\pi} R^{-1/\mu}\right)\right)^\perp}{\abs{x-Rt^{\mu}\exp\left(\frac{i}{2\pi} R^{-1/\mu}\right)}^2} R^{1-\frac{1}{\mu}} dR }^2\,dx\\
&= \left(2-\frac{1}{\mu}\right)^2 t^{4\mu-2} \int_{B(0,t^{-\mu}r)}\abs{\int_0^\infty \frac{\left(x-R \exp\left(\frac{i}{2\pi} R^{-1/\mu}\right)\right)^\perp}{\abs{x-R\exp\left(\frac{i}{2\pi} R^{-1/\mu}\right)}^2} R^{1-\frac{1}{\mu}} dR }^2\,dx\\
&=4\pi^2 t^{4\mu-2} E_{t^{-\mu}r}(\w_1).
\end{split}
\end{equation*}

\hfill $\Box$
}

\subsection{End point estimates of the energy}

The present subsection is devoted to study time evolution of the velocity field carried by Kaden's spiral, in particular we can estimate the evolution of the kinetic energy $E_r(\w_t)$ of Kaden's spiral, when $r>0$ is fixed and $t$ goes either to zero or to infinity. The constants $\a$ and $\mu$ are always related by
\begin{equation}\label{amu}
\a= 2-1/\mu,
\end{equation}
see \eqref{wazne}, \eqref{omega-ball-2} for the definitions of $\a$ and $\mu$. We assume that $\mu\in (1/2,1)$. Recall $\w_0$ and $\w_\8$, the measures defined in Subsection \ref{sec_ex}.

The results presented below show the applicability of our main Theorem \ref{averages} in the examination of Kaden's spiral. On the one hand we show that the velocity field related to Kaden's spiral, by the results of Theorem \ref{Biot-Sav} of Section \ref{vv}
such exists and is an element of $L^2_{loc}$, dissipates the energy in any ball surrounding the origin of a spiral. Indeed, we prove below that when time approaches zero, kinetic energy contained in a ball centered in an origin of Kaden's spiral tends to the maximal possible value of local kinetic energy carried by the vorticity satisfying \eqref{wazne} with $\a\in (0,1)$. When time tends to $\infty$, then the kinetic energy in a ball centered in an origin approaches the minimal possible value. This means that in the meantime the energy is pushed out from any ball surrounding the origin of the Kaden approximation, the latter indicates a sort of viscosity in the center of the spiral.

On the other hand, we show that velocity field associated with Kaden's spiral converges in $L^2_{loc}$ to
\eqref{w_0} with $c=1$ when time tends to $0$. This convergence is interesting in view of the problem of uniqueness of Delort's solutions of 2d Euler equation constructed in \cite{delort}. Such solutions have vorticities being compactly supported nonnegative Radon measures. Whether they are unique is still an open question. If the requirement that vorticities are measures is relaxed, it is known that there are infinitely many vortex sheet solutions satisfying the 2d Euler equations, see \cite{szekelyhidi}. However, velocity fields constructed in \cite{szekelyhidi} are extremely oscillating, so that their vorticities are not even measures. Numerical simulations suggest that spirals of vorticity could be the counterexamples to the uniqueness problem in the Delort's class of solutions with vorticities being measures. It is observed in the computations that such spirals detach from the steady solution
of the form similar to that given by \eqref{w_0}, see for instance \cite{lop_low_nus_zheng, pullin}. Hence our result concerning the convergence of Kaden's spiral with time approaching $0$ is of interest in this respect, in particular since Kaden's spiral is a nonnegative measure.

Finally, let us notice that the evolution of the Kaden spiral is a path in the class of nonnegative $\sigma$-finite measures linking the object with maximal value of the energy functional with the one with minimal value (see Subsection \ref{sec_ex}).

\begin{prop}\label{prop_infty}
Assume that $r>0$ is fixed. Let $\w_t$ be the vorticity of the Kaden spiral defined in \eqref{RofGamma} with $\mu\in
(1/2,1)$
and $\w_\infty$ be given by \eqref{w_infty} with $c=1$, $\a$ related to $\mu$ via \eqref{amu}, in particular $\a\in(0,1)$. Next, consider $u(t)$ and $u_\infty$ as divergence-free velocity fields related to $\w_t$ and $\w_\infty$, respectively, via the Biot-Savart law.
Then
    \begin{equation}\label{est_prop_infty}
    \lim_{t\to \8} \int_{B(0,r)}|u(t)-u_\infty|^2dx=0.
    \end{equation}
In particular,	
    \begin{equation}\label{est_prop_zero5}
    \lim_{t\to \infty} E_r(\w_t) = E_r(\w_\infty).
    \end{equation}
\end{prop}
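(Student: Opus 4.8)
The plan is to realise $u(t)-u_\infty$ as the Biot--Savart velocity of the difference vorticity and to read off its spherical averages from the moment formula, whereupon the long--time behaviour reduces to the decay of oscillatory integrals. Set $\tilde\w=\w_t-\w_\infty$. By \eqref{omega-ball-2} we have $\w_t(B(0,s))=s^{2-1/\mu}=s^{\a}$, while $\w_\infty$ was normalised in \eqref{w_infty} (with $c=1$) to satisfy \eqref{wazne} with the same $\a$; hence both measures obey \eqref{wazne} with $c=1$ and, by Proposition \ref{sprawdz}, also \eqref{mu-bdd2}. Thus $\tilde\w$ is a vorticity in the sense of Section \ref{kinen}, its Biot--Savart velocity is $u(t)-u_\infty$ by linearity of \eqref{Biot_Savart} (each integral being finite by Corollary \ref{lem-setG_1}), and Corollary \ref{3.8} applies. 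Writing $A_s(\tilde\w)$ for the spherical average of $|u(t)-u_\infty|^2$, we therefore have for a.e. $s>0$
\[
4\pi^2 A_s(\tilde\w)=\sum_{n=0}^\infty s^{-2n-2}|m_{s,n}(\tilde\w)|^2+\sum_{k=1}^\infty s^{2k-2}|M_{s,k}(\tilde\w)|^2 .
\]

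The next step is to exploit the cancellations in the moments. Since $\w_t(B(0,s))=\w_\infty(B(0,s))=s^\a$, the zeroth moments agree and $m_{s,0}(\tilde\w)=0$; and since, as noted in Subsection \ref{sec_ex}, every moment of $\w_\infty$ except $m_{s,0}$ vanishes, we get $m_{s,n}(\tilde\w)=m_{s,n}(\w_t)$ for $n\ge1$ and $M_{s,k}(\tilde\w)=M_{s,k}(\w_t)$ for $k\ge1$. Hence $A_s(\tilde\w)$ is built solely from the positive--order moments of $\w_t$. Using the density of $\w_t$ from Lemma \ref{density} (note $2-1/\mu=\a$, so the density is $\a\sigma^{\a-1}$),
\[
m_{s,n}(\w_t)=\a\int_0^s \sigma^{n+\a-1}\,e^{\,in\frac{t}{2\pi}\sigma^{-1/\mu}}\,d\sigma,\qquad
M_{s,k}(\w_t)=\a\int_s^\infty \sigma^{-k+\a-1}\,e^{-ik\frac{t}{2\pi}\sigma^{-1/\mu}}\,d\sigma .
\]
Substituting $\tau=\sigma^{-1/\mu}$ turns each of these into a Fourier integral of the form $\int e^{\pm i n\frac{t}{2\pi}\tau}\,a(\tau)\,d\tau$ (respectively with $k$) whose amplitude $a$ is, after the computation, absolutely integrable: for the inner moment the amplitude is $\tau^{-\mu(n+\a)-1}$, integrable at $\tau=\infty$ because $n+\a>0$, and for the outer moment it is $\tau^{\mu(k-\a)-1}$, integrable at $\tau=0$ because $k-\a>0$. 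The Riemann--Lebesgue lemma then gives $m_{s,n}(\w_t)\to0$ and $M_{s,k}(\w_t)\to0$ as $t\to\infty$, for every fixed $n,k\ge1$ and every $s>0$.

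It remains to pass from this term--by--term decay to \eqref{est_prop_infty}, and this bookkeeping is the only real obstacle, the Riemann--Lebesgue step itself being soft. For the sum over $n,k$ I would dominate each term uniformly in $t$ using \eqref{est_m}--\eqref{est_M}, which give $s^{-2n-2}|m_{s,n}(\w_t)|^2\le \a^2(n+\a)^{-2}s^{2\a-2}$ and $s^{2k-2}|M_{s,k}(\w_t)|^2\le \a^2(k-\a)^{-2}s^{2\a-2}$; since $\sum_{n\in\mathbb{Z}}(n+\a)^{-2}<\infty$, dominated convergence in the summation index yields $A_s(\tilde\w)\to0$ for a.e. $s$. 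For the outer integral in $s$ I would invoke Theorem \ref{averages2} (applied to $\w_t$ and to $\w_\infty$, together with $|u(t)-u_\infty|^2\le 2|u(t)|^2+2|u_\infty|^2$) to obtain $A_s(\tilde\w)\le C\,s^{2\a-2}$ with $C$ independent of $t$; as $s\mapsto s^{2\a-1}$ is integrable on $(0,r)$ for $\a>0$, dominated convergence in \eqref{radial} gives $\int_{B(0,r)}|u(t)-u_\infty|^2\,dx\to0$, which is exactly \eqref{est_prop_infty}. Finally \eqref{est_prop_zero5} follows at once, since the reverse triangle inequality in $L^2(B(0,r))$ gives $\big|E_r(\w_t)^{1/2}-E_r(\w_\infty)^{1/2}\big|\le\big(\int_{B(0,r)}|u(t)-u_\infty|^2\,dx\big)^{1/2}\to0$.
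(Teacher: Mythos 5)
Your argument is correct, and its skeleton coincides with the paper's: you form the difference vorticity $\w_t-\w_\infty$, invoke Corollary \ref{3.8}, and use exactly the same cancellation (equality of the zeroth moments plus vanishing of all higher moments of $\w_\infty$) to reduce everything to the decay of $m_{s,n}(\w_t)$ and $M_{s,k}(\w_t)$ for $n,k\ge 1$. Where you genuinely depart from the paper is in how that decay is proved and then summed. The paper integrates by parts in each moment (treating $M_{r,1}$, $M_{r,2}$ and $M_{r,n}$ for $n\ge 3$ as separate cases) to obtain explicit rates such as $|m_{r,n}(\w_t)|\le C r^{n+2}/(tn)$ and $|M_{r,1}(\w_t)|\le C t^{\mu-1}$, which sum to the quantitative bound $|A_r(\w_t-\w_\infty)|\le C t^{2\mu-2}(1+r^2)$ and give \eqref{est_prop_infty} directly upon integration in $r$. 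You instead substitute $\tau=\sigma^{-1/\mu}$ to exhibit each moment as the Fourier transform of an $L^1$ amplitude and apply Riemann--Lebesgue, then pass to the limit in the series by dominated convergence against the $t$-independent bounds \eqref{est_m}--\eqref{est_M}, and in the radial integral by dominated convergence against $Cs^{2\a-2}$ coming from Theorem \ref{averages2}. Your route is softer and avoids all the casework, but yields no rate, whereas the paper's computation shows in addition that the energy gap is $O(t^{2\mu-2})$. One small point common to both arguments that you could make explicit: the moment identity holds for $s$ in a full-measure set which a priori depends on $t$; this is harmless (intersect over a sequence $t_j\to\infty$, or note that for Kaden spirals the set $G$ is in fact all of $(0,\infty)$), but it deserves a sentence before you speak of $A_s(\tilde\w)\to 0$ for a.e.\ fixed $s$.
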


\begin{proof}

First, we notice that $m_{r,0}(\w_t) = m_{r,0}(\w_\8)$. Indeed, $\w_\8$ satisfies \eqref{wazne} with $c=1$, $\a=2-1/\mu$ and $\w_t$ satisfies \eqref{omega-ball-2}. Next, as we have already observed, all the other moments for $\w_\8$ vanish. Hence, using Corollary \ref{3.8}, we arrive at
		\begin{eqnarray}\label{warunek}
4\pi^2|A_r(\w_t-\w_\8)|&=&\sum_{n=0}^{\8}r^{-2n-2}|m_{r,n}(\w_t-\w_\8)|^2+\sum_{k=1}^{\8}r^{2k-2}|M_{r,k}(\w_t-\w_\8)|^2\nonumber\\
&=&	\sum_{n=0}^{\8}r^{-2n-2}|m_{r,n}(\w_t)-m_{r,n}(\w_\8)|^2+\sum_{k=1}^{\8}r^{2k-2}|M_{r,k}(\w_t)-M_{r,k}(\w_\8)|^2\nonumber\\
&=&\sum_{n=1}^{\8}r^{-2n-2}|m_{r,n}(\w_t)|^2+\sum_{k=1}^{\8}r^{2k-2}|M_{r,k}(\w_t)|^2.
		\end{eqnarray}

Using Lemma \ref{density} and integration by parts, for $n\geq 1$ we get
    \begin{equation*}
    \begin{split}\label{est47}
    m_{r,n}(\w_t) &= \a \int_0^r s^{n+1-1/\mu} \exp\left(\frac{itn}{2\pi} s^{-1/\mu}\right)\, ds\cr
    &=-\frac{2\pi\mu \a}{itn} \int_0^r  s^{n+2} \, \left(\exp\left(\frac{itn}{2\pi} s^{-1/\mu}\right)\right)'\, ds\cr
    &=-\frac{2\pi \mu \a}{itn} \left[r^{n+2} \exp\left(\frac{itn}{2\pi} r^{-1/\mu}\right) -
    (n+2)  \int_0^r s^{n+1} \, \exp\left(\frac{itn}{2\pi} s^{-1/\mu}\right)\, ds\right]
    \end{split}
    \end{equation*}
and, bounding the absolute value of both exponential terms by 1,
    \begin{equation}\label{est48}
    \abs{m_{r,n}(\w_t)} \leq C \frac{r^{n+2}}{tn}.
    \end{equation}

Now we proceed to estimate the outer moments. We shall consider $M_{r,1}$ and $M_{r,2}$ separately. By changing variables $ t s^{-1/\mu}=a$ we get
    \begin{equation*}\label{est49}
    \begin{split}
    M_{r,1}(\w_t) &= \a \int_r^\8 s^{-1/\mu} \exp\left(-\frac{it}{2\pi} s^{-1/\mu}\right)\, ds\cr
    &= \a\mu t^{\mu-1} \int_0^{t\, r^{-1/\mu}} a^{-\mu} \exp\left(-\frac{ia}{2\pi}\right)\, da.
    \end{split}
    \end{equation*}
Since the integral $\int_0^\8 a^{-\mu} \exp(-\frac{ia}{2\pi})\, da$ is convergent, we deduce that
    \begin{equation}\label{est410}
    \abs{M_{r,1}(\w_t)} \leq C t^{\mu-1}.
    \end{equation}
Next
    \begin{equation*}\label{est411}
    \begin{split}
    M_{r,2}(\w_t) &= \a \int_r^\8 s^{-1-1/\mu} \exp\left(-\frac{it}{\pi} s^{-1/\mu}\right)\, ds\cr
    &= \frac{\a\pi\mu}{it} \int_r^\8 \left(\exp\left(-\frac{it}{\pi} s^{-1/\mu}\right)\right)'\, ds\cr
    &= \frac{\a\pi\mu}{it} \left(1-\exp\left(-\frac{it}{\pi} r^{-1/\mu}\right)\right)
    \end{split}
    \end{equation*}
and
    \begin{equation}\label{est412}
    \abs{M_{r,2}(\w_t)} \leq C t^{-1}.
    \end{equation}
For $n\geq 3$ we integrate by parts getting
    \begin{equation*}\label{est413}
    \begin{split}
    M_{r,n}(\w_t) &= \a \int_r^\8 s^{-n+1-1/\mu} \exp\left(-\frac{itn}{2\pi} s^{-1/\mu}\right)\, ds\cr
    &=\frac{2\pi\mu \a}{itn} \int_r^\8  s^{-n+2} \, \left(\exp\left(-\frac{itn}{2\pi} s^{-1/\mu}\right)\right)'\, ds\cr
    &=\frac{2\pi \mu \a}{itn} \left[-r^{-n+2} \exp\left(-\frac{itn}{2\pi} r^{-1/\mu}\right) +
    (n-2)  \int_r^\8 s^{-n+1} \, \exp\left(-\frac{itn}{2\pi} s^{-1/\mu}\right)\, ds\right].
    \end{split}
    \end{equation*}
Estimating similarly as in \eqref{est48},
    \begin{equation}\label{est414}
    \abs{M_{r,n}(\w_t)} \leq C \frac{r^{2-n}}{tn}.
    \end{equation}
Summarizing, plugging the estimates \eqref{est48}--\eqref{est414} into \eqref{warunek}, for $t>1$ we get
    \begin{equation}\label{aaarrr}
    |A_r(\w_t-\w_\8)| \leq  C t^{2\mu-2} (1+r^2),
    \end{equation}
which, together with \eqref{radial}, implies \eqref{est_prop_infty}.

\hfill $\Box$
\end{proof}

\begin{prop}\label{prop_zero}
Assume that $r>0$ is fixed. Let $\w_t$ be the vorticity of the Kaden spiral defined in \eqref{RofGamma} with $\mu\in(1/2,1)$
and $\w_0$ be given by \eqref{w_0} with $c=1$, $\a$ related to $\mu$ via \eqref{amu}. Next, let $u(t)$ and $u_0$ be divergence-free velocity fields associated with $\w_t$ and $\w_0$, respectively, by the
Biot-Savart operator \eqref{Biot_Savart}. Then
	\begin{equation}\label{zero_conv}
	\lim_{t\rightarrow 0^+}\int_{B(0,r)}|u(t)-u_0|^2dx=0.
	\end{equation}
In particular,	
    \begin{equation}\label{est_prop_zero}
    \lim_{t\to 0^+} E_r(\w_t) = E_r(\w_0).
    \end{equation}
\end{prop}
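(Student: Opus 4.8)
The plan is to follow the template of the proof of Proposition \ref{prop_infty}, replacing the radially symmetric comparison measure by the half-line sheet $\w_0$. Since $\w_0$ satisfies \eqref{wazne} with $c=1$ and $\a=2-1/\mu\in(0,1)$, Proposition \ref{sprawdz} guarantees that it also satisfies \eqref{mu-bdd2}; the same holds for $\w_t$ by \eqref{omega-ball-2}. Hence Corollary \ref{3.8} applies to the vorticity $\w_t-\w_0$, and for a.e.\ $r>0$ (those $r\in G_{sign}$) it yields
\[
4\pi^2 A_r(\w_t-\w_0)=\sum_{n=0}^\8 r^{-2n-2}\abs{m_{r,n}(\w_t)-m_{r,n}(\w_0)}^2+\sum_{k=1}^\8 r^{2k-2}\abs{M_{r,k}(\w_t)-M_{r,k}(\w_0)}^2.
\]
First I would note that the $n=0$ summand vanishes identically, since $m_{r,0}(\w_t)=r^\a=m_{r,0}(\w_0)$ by \eqref{omega-ball-2} and \eqref{wazne}.

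Next I would establish term-by-term convergence of the moments as $t\to 0^+$. Using Lemma \ref{density} the inner moments read $m_{r,n}(\w_t)=\a\int_0^r s^{n+1-1/\mu}\exp(\frac{itn}{2\pi}s^{-1/\mu})\,ds$ and the outer ones $M_{r,k}(\w_t)=\a\int_r^\8 s^{-k+1-1/\mu}\exp(-\frac{itk}{2\pi}s^{-1/\mu})\,ds$. As $t\to 0^+$ the oscillatory factors tend pointwise to $1$, while the integrands are bounded by $\a s^{n+1-1/\mu}$ (integrable on $(0,r)$) and $\a s^{-k+1-1/\mu}$ (integrable on $(r,\8)$, as $k\geq 1>\a$), respectively. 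Dominated convergence then gives $m_{r,n}(\w_t)\to\a\int_0^r s^{n+1-1/\mu}\,ds=m_{r,n}(\w_0)$ and, likewise, $M_{r,k}(\w_t)\to M_{r,k}(\w_0)$, so every summand above tends to $0$.

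The hard part — and the step where this proof genuinely differs from that of Proposition \ref{prop_infty} — is passing the limit through the infinite series, because here the comparison measure $\w_0$ has nonvanishing moments of every order, so no finite truncation suffices. The key observation is that estimating the oscillatory integrands by their moduli gives the uniform bounds $\abs{m_{r,n}(\w_t)}\leq\a\int_0^r s^{n+1-1/\mu}\,ds=m_{r,n}(\w_0)$ and $\abs{M_{r,k}(\w_t)}\leq M_{r,k}(\w_0)$, valid for every $t>0$. Consequently the summands are dominated, uniformly in $t$, by $4r^{-2n-2}\abs{m_{r,n}(\w_0)}^2$ and $4r^{2k-2}\abs{M_{r,k}(\w_0)}^2$, whose total is $16\pi^2 A_r(\w_0)<\8$ by Theorem \ref{averages2}. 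Applying the dominated convergence theorem for series (along an arbitrary sequence $t_j\to 0^+$) then yields $A_r(\w_t-\w_0)\to 0$ for every $r\in G_{sign}$, i.e.\ for a.e.\ $r$.

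Finally I would upgrade this pointwise-in-$r$ statement to the $L^2$ convergence \eqref{zero_conv}. By linearity of the Biot--Savart operator $u(t)-u_0$ is the velocity of $\w_t-\w_0$, so $A_s(\w_t-\w_0)$ is exactly the spherical average of $\abs{u(t)-u_0}^2$ and polar coordinates give $\int_{B(0,r)}\abs{u(t)-u_0}^2\,dx=2\pi\int_0^r s\,A_s(\w_t-\w_0)\,ds$. Since both $\w_t$ and $\w_0$ satisfy \eqref{wazne} with $c=1$, Theorem \ref{averages2} bounds $A_s(\w_t)$ and $A_s(\w_0)$ by $C s^{2\a-2}$, whence $s\,A_s(\w_t-\w_0)\leq 2s\big(A_s(\w_t)+A_s(\w_0)\big)\leq C' s^{2\a-1}$, an integrable majorant on $(0,r)$ independent of $t$. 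A second dominated convergence argument then passes the limit inside the $s$-integral and yields \eqref{zero_conv}. The energy statement \eqref{est_prop_zero} is immediate, as $L^2$ convergence of $u(t)$ to $u_0$ forces $E_r(\w_t)=\|u(t)\|_{L^2(B(0,r))}^2\to\|u_0\|_{L^2(B(0,r))}^2=E_r(\w_0)$.
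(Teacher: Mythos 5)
Your proof is correct, but it takes a genuinely different route from the paper's. The paper fixes $\e>0$, uses the a priori moment bounds \eqref{est_m}--\eqref{est_M} (valid for both $\w_t$ and $\w_0$) to make the tail $\sum_{n\ge N}+\sum_{k\ge N}$ uniformly small, and then estimates each of the finitely many remaining terms quantitatively via $\abs{e^{i\tau}-1}\le C\abs{\tau}$; this forces a separate, rather delicate treatment of the inner moment $m_{r,1}(\w_t-\w_0)$, split into the cases $\mu\in(2/3,1)$, $\mu\in(1/2,2/3)$ and $\mu=2/3$, because $\int_0^r s^{2-2/\mu}\,ds$ ceases to converge when $\mu\le 2/3$. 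Your argument replaces all of this with the single structural observation $\abs{m_{r,n}(\w_t)}\le m_{r,n}(\w_0)$ and $\abs{M_{r,k}(\w_t)}\le M_{r,k}(\w_0)$ (the half-line sheet is exactly the ``unrolled'' Kaden spiral, so taking absolute values inside the oscillatory integrals lands on the moments of $\w_0$), which yields the $t$-uniform summable majorant $16\pi^2 A_r(\w_0)$ and lets dominated convergence for series do the work, with no case analysis on $\mu$ whatsoever. What you lose is the explicit rate of convergence in $t$ that the paper's computation provides; what you gain, besides brevity, is a cleaner handling of the final step: your $t$-uniform integrable majorant $s\,A_s(\w_t-\w_0)\le C's^{2\a-1}$ on $(0,r)$ is precisely what is needed to pass from the a.e.-in-$s$ convergence of the spherical averages to the $L^2$ statement \eqref{zero_conv}, a point the paper's proof leaves implicit (its own per-term bounds degenerate as $s\to 0$ and could not serve as that majorant directly). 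The minor measurability point --- that the exceptional set of radii may depend on $t$ --- is handled correctly by working along an arbitrary sequence $t_j\to 0^+$ and intersecting countably many full-measure sets.
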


\begin{proof}
Fix $\e>0$. By \eqref{radial} and Corollary \ref{3.8} (a signed measures version of Theorem \ref{averages}) it suffices to show that
      \begin{equation}\label{est_dwie_sumy}
    \abs{\sum_{n=0}^\8 A_n + \sum_{k=1}^\8 B_k} := \abs{\sum_{n=0}^\8 r^{-2n-2} \abs{m_{r,n}(\w_t-\w_0)}^2 + \sum_{k=1}^\8 r^{2k-2} \abs{M_{r,k}(\w_t-\w_0)}^2} <\e
    \end{equation}
for $t>0$ small enough. We observe first that $A_0=m_{r,0}(\w_t-\w_0)=0$.

Since
\[
|m_{r,n}(\w_t-\w_0)|=|m_{r,n}(\w_t)-m_{r,n}(\w_0)|\leq |m_{r,n}(\w_t)|+|m_{r,n}(\w_0)|
\]
as well as
\[
|M_{r,k}(\w_t-\w_0)|=|M_{r,k}(\w_t)-M_{r,k}(\w_0)|\leq |M_{r,k}(\w_t)|+|M_{r,k}(\w_0)|,
\]
and due to the fact that both $\w_t$ and $\w_0$ satisfy assumptions of Theorem \ref{averages2},
in view of \eqref{est_m} and \eqref{est_M}, we find $N\in \mathbb N$ such that
\begin{equation}\label{prosiur1}
    \sum_{n=N}^\8 \abs{A_n} + \sum_{k=N}^\8 \abs{B_k} <\e/3.
\end{equation}
Assume that $k=1,...,N-1$. Using Lemma \ref{density} and the estimate $\abs{e^{i\tau}-1} \leq C |\tau|$  for $\tau \in \mathbb R$ we get
    \begin{equation*}
    \begin{split}
    \abs{M_{r,k}(\w_t-\w_0)} &\leq \a \int_r^\8 s^{-k+1-1/\mu} \left|\exp\left(-\frac{itk}{2\pi} s^{-1/\mu}\right)-1\right|\, ds\cr
    &\leq C tk\int_r^\8 s^{-k+1-2/\mu}\, ds\cr
    &\leq C t r^{-k+2-2/\mu}.
    \end{split}
    \end{equation*}
Consequently,
\begin{equation}\label{prosiur}
    \sum_{k=1}^{N-1} \abs{B_k} \leq C t^2 \sum_{k=1}^{N-1} r^{2k-2} r^{-2k+4-4/\mu}\leq C t^2 N r^{2-4/\mu}<\e/3,
\end{equation}
where the last inequality holds for $t>0$ small enough.

We now turn to estimating $A_n$ for $n=2,...,N-1$ (the case $n=1$ is a bit more delicate and will be treated separately). By using Lemma \ref{density} one more time,
    \begin{eqnarray}\label{prosiur2}
    \abs{m_{r,n}(\w_t-\w_0)} &\leq \a \int_0^r s^{n+1-1/\mu} \left|\exp\left(\frac{itn}{2\pi} s^{-1/\mu}\right)-1\right|\, ds\nonumber\\
    &\leq C t n \int_0^r s^{n+1-2/\mu}\, ds\nonumber\\
    &\leq C t r^{n+2-2/\mu}.
    \end{eqnarray}
We have used $n+2-2/\mu>0$. The latter is a consequence of  $n\geq 2$ and $\mu \in (1/2,1)$. Notice that similar estimate holds for $n=1$ and $\mu \in (2/3,1)$. Indeed, then $2-2/\mu>-1$ and so
\[
\abs{m_{r,1}(\w_t-\w_0)}\leq \a \int_0^r s^{2-1/\mu} \left|\exp\left(\frac{it}{2\pi} s^{-1/\mu}\right)-1\right|\, ds
\leq C t\int_0^r s^{2-2/\mu}\, ds\leq C t r^{3-2/\mu}.
\]
Consequently, see also \eqref{prosiur2}, for $\mu \in (2/3,1)$ and $t>0$ small enough
\[
\sum_{n=1}^{N-1}|A_n|\leq Ct^2r^{2-4/\mu}<\e/3,
\]
which together with \eqref{prosiur1} and \eqref{prosiur} gives the claim of Proposition \ref{prop_zero} in the range of parameters $\mu\in (2/3,1)$.

In the remaining case $\mu \in(1/2,2/3]$ we still have to deal with $m_{r,1}$. Again we split our considerations into two cases, $\mu<2/3$ and $\mu=2/3$. In the first one we have
    \begin{equation*}
    \begin{split}
    \abs{m_{r,1}(\w_t-\w_0)} &\leq \a \int_0^r s^{2-1/\mu} \left|\exp\left(\frac{it}{2\pi} s^{-1/\mu}\right)-1\right|\, ds = \a\left( \int_0^{t^\mu}... + \int_{t^\mu}^r ...\right)\cr
    &\leq C \int_0^{t^\mu} s^{2-1/\mu}\, ds + C t \int_{t^\mu}^r s^{2-2/\mu}\, ds \cr
    &\leq C t^{3\mu-1} + C t \left[s^{3-2/\mu}\right]^{s=t^\mu}_{s=\8} \leq C t^{3\mu-1}.
    \end{split}
    \end{equation*}
Therefore, in the case $\mu\in (1/2,2/3)$, taking into account also  \eqref{prosiur2}, we estimate
\begin{equation}\label{last}
    \sum_{n=1}^{N-1} \abs{A_n} \leq C t^{6\mu-2}r^{-4} + C t^2 \sum_{n=2}^{N-1} r^{-2n-2} r^{2n+4-4/\mu}\leq C t^{6\mu-2} N \max(r^{-4},r^{2-4/\mu})<\e/3,
\end{equation}
if $t>0$ is small enough. Hence, the claim of Proposition \ref{prop_zero} follows for $\mu\in (1/2,2/3)$.

Similarly in the case $\mu=2/3$
\[
\abs{m_{r,1}(\w_t-\w_0)}\leq C t^{3\cdot2/3-1}+Ct \int_{t^{2/3}}^r s^{-1}\leq Ct+Ct(\ln r-2/3\ln t).
\]
Hence for small enough $t>0$ (after taking into account \eqref{prosiur2})
\begin{equation}\label{naprawdelast}
\sum_{n=1}^{N-1} \abs{A_n} \leq C(t-t\ln t)^2r^{-4}+ Ct^2r^{-4}\ln^2 r+Ct^2r^{-4}<\e/3.
\end{equation}
Estimates \eqref{last} and \eqref{naprawdelast} finish the proof of \eqref{est_dwie_sumy} for $\mu\in (1/2,2/3]$.

\hfill $\Box$
\end{proof}

At the end let us remark that Kaden's spiral is continuous in a certain sense. The proof is very similar to the proof of Proposition \ref{prop_zero}, so we only provide a sketch.

\begin{prop}\label{prop_cont}
Assume that $r>0$ is fixed. Moreover take $0<t, t_0<\infty$. Let $\w_t, \w_{t_0}$ be vorticities of Kaden spirals at times $t, t_0$ defined in \eqref{RofGamma} with $\mu\in(1/2,1)$. Next, let $u(t)$ and $u(t_0)$ be the divergence-free velocity fields associated to $\w_t$ and $\w_{t_0}$, respectively, by the
Biot-Savart operator \eqref{Biot_Savart}. Then
	\begin{equation}\label{cont}
	\lim_{t\rightarrow t_0}\int_{B(0,r)}|u(t)-u(t_0)|^2dx=0.
	\end{equation}
\end{prop}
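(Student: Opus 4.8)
\emph{Plan.} The strategy is to combine linearity of Biot--Savart with the moment representation, and then pass to the limit $t\to t_0$ by dominated convergence; organized this way the argument is softer than the one for Proposition \ref{prop_zero}. Since the Biot--Savart operator \eqref{Biot_Savart} is linear, $u(t)-u(t_0)$ is exactly the velocity field of the vorticity $\tilde\w:=\w_t-\w_{t_0}$, and both $\w_t,\w_{t_0}$ satisfy \eqref{wazne} with the same constants $c=1$ and $\a=2-1/\mu\in(0,1)$, hence also \eqref{mu-bdd2}. Thus Corollary \ref{3.8} applies to $\tilde\w$ and, combined with \eqref{radial}, gives
\[
\int_{B(0,r)}\abs{u(t)-u(t_0)}^2\,dx=2\pi\int_0^r A_s(\tilde\w)\,s\,ds,
\]
where, for a.e. $s$ (precisely $s\in G_{sign}$),
\[
4\pi^2 A_s(\tilde\w)=\sum_{n=0}^\8 s^{-2n-2}\abs{m_{s,n}(\w_t)-m_{s,n}(\w_{t_0})}^2+\sum_{k=1}^\8 s^{2k-2}\abs{M_{s,k}(\w_t)-M_{s,k}(\w_{t_0})}^2 .
\]
The $n=0$ term vanishes, since $m_{s,0}(\w_t)=\w_t(B(0,s))=s^{2-1/\mu}$ is independent of $t$ by \eqref{omega-ball-2}.

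\emph{Pointwise convergence of $A_s$.} Next I would fix an admissible $s>0$ and show $A_s(\tilde\w)\to0$ as $t\to t_0$. By Lemma \ref{density},
\[
m_{s,n}(\w_t)=\a\int_0^s \sigma^{n+1-1/\mu}\exp\!\Big(\tfrac{itn}{2\pi}\sigma^{-1/\mu}\Big)\,d\sigma,\qquad
M_{s,k}(\w_t)=\a\int_s^\8 \sigma^{-k+1-1/\mu}\exp\!\Big(-\tfrac{itk}{2\pi}\sigma^{-1/\mu}\Big)\,d\sigma,
\]
and the integrands are continuous in $t$ and dominated, respectively, by $\a\sigma^{n+1-1/\mu}\in L^1(0,s)$ and $\a\sigma^{-k+1-1/\mu}\in L^1(s,\8)$ (integrability holds precisely because $\mu\in(1/2,1)$). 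Hence every moment difference, and so every summand above, tends to $0$. To pass the limit through the two series I would dominate their tails uniformly in $t$ using \eqref{est_m}, \eqref{est_M} from Theorem \ref{averages2}, which hold for $\w_t$ and $\w_{t_0}$ alike: by the triangle inequality
\[
s^{-2n-2}\abs{m_{s,n}(\w_t)-m_{s,n}(\w_{t_0})}^2\le \frac{4\a^2}{(n+\a)^2}\,s^{2\a-2},\qquad
s^{2k-2}\abs{M_{s,k}(\w_t)-M_{s,k}(\w_{t_0})}^2\le \frac{4\a^2}{(k-\a)^2}\,s^{2\a-2},
\]
and $\sum_{n\ge0}(n+\a)^{-2}+\sum_{k\ge1}(k-\a)^{-2}<\8$. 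Dominated convergence with respect to the counting measure then yields $A_s(\tilde\w)\to0$.

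\emph{Conclusion and main obstacle.} Finally I would pass to the limit in the $s$-integral. Applying the same triangle inequality to the $L^2(S^1)$-seminorm $A_s^{1/2}$ gives the time-independent bound $A_s(\tilde\w)\le \frac{\a^2}{\sin^2(\pi\a)}\,s^{2\a-2}$, whose product with $s$ is integrable on $(0,r)$ because $\a>0$; dominated convergence in $s$ then upgrades the pointwise statement to $\int_{B(0,r)}\abs{u(t)-u(t_0)}^2dx\to0$, which is \eqref{cont}. The only real work is assembling the two uniform dominations, and these are essentially free from Theorem \ref{averages2} once one observes that $\w_t$ and $\w_{t_0}$ share the same $c$ and $\a$ — this is exactly what makes the argument gentler than for Proposition \ref{prop_zero}. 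If one instead follows the explicit-rate route of Proposition \ref{prop_zero}, the single genuine obstacle is again the inner moment $m_{s,1}$ for $\mu\in(1/2,2/3]$, where $\sigma^{2-2/\mu}$ fails to be integrable at the origin; there one splits the $\sigma$-integral at the threshold $\sigma=|t-t_0|^\mu$ and balances the two pieces, obtaining a bound of order $|t-t_0|^{3\mu-1}$ (with a logarithmic correction at $\mu=2/3$), which again tends to $0$.
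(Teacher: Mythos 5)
Your argument is correct, and its skeleton coincides with the paper's: represent $u(t)-u(t_0)$ as the Biot--Savart velocity of $\wt\w=\w_t-\w_{t_0}$, apply the signed-measure moment formula (Corollary \ref{3.8}), kill the $n=0$ term via \eqref{omega-ball-2}, and control the tails of the two series uniformly in $t$ by \eqref{est_m}--\eqref{est_M}, since $\w_t$ and $\w_{t_0}$ share the same $c$ and $\a$. Where you genuinely diverge is in the treatment of the finitely many remaining terms and of the final $s$-integration. The paper proceeds by explicit rates, reducing to the computations of Proposition \ref{prop_zero} with $t$ replaced by $t-t_0$ and the bound $\abs{e^{i\tau}-1}\le C\abs{\tau}$; this forces the same case analysis around $m_{s,1}$ for $\mu\in(1/2,2/3]$ (splitting the integral at $\abs{t-t_0}^{\mu}$, with a logarithm at $\mu=2/3$), which you correctly identify as the one delicate point of that route. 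You instead observe that each moment $t\mapsto m_{s,n}(\w_t)$, $t\mapsto M_{s,k}(\w_t)$ is continuous by dominated convergence in the $\sigma$-integral (the dominating functions $\sigma^{n+1-1/\mu}$ and $\sigma^{-k+1-1/\mu}$ are integrable precisely because $\mu\in(1/2,1)$), and then pass to the limit through the series and through $\int_0^r A_s(\wt\w)\,s\,ds$ by two further applications of dominated convergence, the last one using the $t$-independent bound $A_s(\wt\w)\le \frac{\a^2}{\sin^2(\pi\a)}s^{2\a-2}$ with $s^{2\a-1}$ integrable near $0$. This buys a uniform, case-free argument and makes the integration in $s$ explicit (a point the paper's sketch leaves implicit, even though its head-term bounds degenerate as $s\to 0$); what it gives up is any quantitative rate in $\abs{t-t_0}$, which the paper's route provides. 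Both are valid proofs of \eqref{cont}.
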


\begin{proof}
The beginning of the proof goes the same way as in Proposition \ref{prop_zero}.
Let $E_n$ and $D_k$ be defined as
\[
E_n:=r^{-2n-2} \abs{m_{r,n}(\w_t-\w_{t_0})}^2, D_k := r^{2k-2} \abs{M_{r,k}(\w_t-\w_{t_0})}^2,
\]
then,
due to the fact that $\w_t$ and $\w_{t_0}$ (in view of \eqref{omega-ball-2}) satisfy assumptions of Theorem \ref{averages2}, estimates \eqref{est_m},\eqref{est_M} guarantee that
for any $\e>0$ we can find $N>0$ such that
\begin{equation}\label{est_dwie_sumy_cont}
    \sum_{n=N}^\8 |E_n| + \sum_{k=N}^\8 |D_k| <\e/3.
    \end{equation}
To show \eqref{cont} and thus finish the proof of Proposition \ref{prop_cont} we only need to
estimate $E_n$ for $n=0,1,...,N-1$ and $D_k$ for $k=1,...,N-1$. On the one hand, in view of \eqref{omega-ball-2},
$E_0=0$. Next, utilizing Lemma \ref{density}, we notice that
\[
\abs{m_{r,n}(\w_t-\w_{t_0})} \leq \a \int_0^r s^{n+1-1/\mu} \left|\exp\left(\frac{i(t-t_0)n}{2\pi} s^{-1/\mu}\right)-1\right|\, ds,
\]
\[
\abs{M_{r,k}(\w_t-\w_{t_0})} \leq \a \int_r^\8 s^{-k+1-1/\mu} \left|\exp\left(-\frac{i(t-t_0)k}{2\pi} s^{-1/\mu}\right)-1\right|\, ds,
\]
so that for $t\rightarrow t_0$ the required estimates of the initial $E_n, D_k$ may be achieved exactly the same way as in the proof of Proposition \ref{prop_zero}.

\hfill $\Box$
\end{proof}

At the end let us provide a simple corollary.
\begin{cor}\label{contin}
Let $\w_t$ be the Kaden spiral. For any fixed $r>0$, the function $t\rightarrow E_r(\w_t)$ is continuous.
\end{cor}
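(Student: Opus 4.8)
The plan is to deduce the continuity of $t\mapsto E_r(\w_t)$ directly from Proposition \ref{prop_cont}, which already establishes $L^2_{loc}$-continuity of the associated velocity fields. The key observation is that $E_r(\w_t)$ is precisely the squared $L^2$-norm of the velocity $u(t)$ over $B(0,r)$, so continuity of $E_r$ should follow from continuity of $u(t)$ in the $L^2(B(0,r))$-norm by an elementary triangle-inequality argument.

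First I would fix $r>0$ and $t_0\in(0,\8)$, and write $E_r(\w_t)=\norm{u(t)}_{L^2(B(0,r))}^2$, where $u(t)$ denotes the divergence-free velocity field associated with $\w_t$ via \eqref{Biot_Savart}. By Proposition \ref{prop_cont}, we have $\norm{u(t)-u(t_0)}_{L^2(B(0,r))}\to 0$ as $t\to t_0$. Then I would invoke the reverse triangle inequality for the $L^2(B(0,r))$-norm, namely
\[
\abs{\,\norm{u(t)}_{L^2(B(0,r))} - \norm{u(t_0)}_{L^2(B(0,r))}\,} \leq \norm{u(t)-u(t_0)}_{L^2(B(0,r))},
\]
which shows that $t\mapsto \norm{u(t)}_{L^2(B(0,r))}$ is continuous at $t_0$.

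Finally, since squaring is a continuous operation on $[0,\8)$, continuity of $t\mapsto \norm{u(t)}_{L^2(B(0,r))}$ immediately yields continuity of $t\mapsto \norm{u(t)}_{L^2(B(0,r))}^2 = E_r(\w_t)$ at $t_0$. As $t_0\in(0,\8)$ was arbitrary, the function $t\mapsto E_r(\w_t)$ is continuous on $(0,\8)$, which is the claim. I do not expect any serious obstacle here: the entire content is packaged in Proposition \ref{prop_cont}, and the corollary is a routine consequence. The only point requiring a word of care is that all the norms are genuinely finite, which is guaranteed by Remark \ref{uwaga} (or Theorem \ref{mainthm1}) since each $\w_t$ satisfies \eqref{omega-ball-2}, hence \eqref{wazne} with $\a=2-1/\mu\in(0,1)$, so that $u(t)\in L^2_{loc}$.
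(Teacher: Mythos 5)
Your proposal is correct and is exactly the intended derivation: the paper states the corollary without proof immediately after Proposition \ref{prop_cont}, and the reverse-triangle-inequality argument from the $L^2(B(0,r))$-continuity of $u(t)$, together with the finiteness of the norms guaranteed by Remark \ref{uwaga}, is precisely what is meant. Nothing further is needed.
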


\medskip
\noindent
{\bf Acknowledgements.}\\
T. Cie\'slak was partially supported by the National Science Centre (NCN), Poland, under grant 2013/09/D/ST1/03687. K. Oleszkiewicz was partially supported  by  the National Science Centre, Poland, project number 2015/18/A/ST1/00553. M. Preisner had a post-doc at WCMCS in Warsaw, where he met Kaden's spirals and started working in a project which led to the present article. He wishes to express his gratitude for support and hospitality. M. Preisner was partially supported by National Science Centre (NCN), Poland, Grant No. 2017/25/B/ST1/00599.
M.~Szuma\'nska was partially supported  by National Science Centre (NCN), Poland, Grant No.\ 2013/10/M/ST1/00416 \emph{Geometric curvature energies for subsets of the Euclidean space.} The second and fourth authors include double affiliations since a part ot the work have been conducted during their leave from University of Warsaw to IMPAN in the academic year 2016/17. The authors declare no conflict of interest.

\end{document}